\setlist[enumerate]{leftmargin=2em,itemindent=0em, labelindent=0pt,labelwidth=1.5em,labelsep=.5em, align=left, noitemsep}
\newlist{condenum}{enumerate}{1}
\setlist[condenum]{ leftmargin=4em,itemindent=0em, labelindent=0pt,labelwidth=1.5em,labelsep=.5em, align=left, noitemsep}
\newlist{txtenum}{enumerate}{1}
\setlist[txtenum]{leftmargin=0em,itemindent=1.5em, labelindent=0pt,labelwidth=1em,labelsep=.5em, align=left}
\titleformat{\subsubsection}[runin]{\normalfont\normalsize\bfseries}{\thesubsubsection.\ }{0pt}{}[.]
\renewcommand{\thesubsubsection}{\thesubsection.\alph{subsubsection}}
\newcommand{\mypar}{\par\medskip\noindent}
\theoremstyle{plain}
\newtheorem{theorem}{Theorem}
\newtheorem*{theorem*}{Theorem}
\newtheorem{proposition}[theorem]{Proposition}
\newtheorem*{proposition*}{Proposition}
\newtheorem*{corollary*}{Corollary}
\newtheorem{lemma}[theorem]{Lemma}
\newtheorem*{lemma*}{Lemma}
\newtheorem*{observation*}{Observation}
\newtheorem*{remark*}{Remark}
\newtheorem*{remarks*}{Remarks}
\theoremstyle{definition}
\newtheorem*{definition*}{Definition}
\newtheorem*{notation*}{Notation}
\newtheorem*{example*}{Example}
\newtheorem*{exercise*}{Exercise}
\theoremstyle{remark}
\newtheorem*{claim*}{Claim}
\newtheorem*{conjecture*}{Conjecture}
\newtheorem*{question*}{Question}
\newtheorem*{questions*}{Questions}
\newtheorem*{problem*}{Problem}
\newtheorem*{problems*}{Problems}
\newtheorem*{openproblem*}{Open Problem}
\newcommand{\subclass}[1]{}
\newcommand{\enumTi}[1]{\renewcommand{\theenumi}{#1}}
\newcommand{\alphenumi}{\enumTi{\alph{enumi}}}
\newcommand{\romenumi}{\enumTi{\roman{enumi}}}
\newlength{\hspaceforlengthglumpf}
\renewcommand{\em}{\sl}
\newcommand{\comment}[1]{\text{\footnotesize[#1]}}
\DeclareMathOperator{\tr}{tr}
\DeclareMathOperator{\Supp}{Supp}
\DeclareMathOperator{\Spec}{Spec}
\newcommand{\lt}{\left}
\newcommand{\rt}{\right}
\newcommand{\T}{{\mspace{-1mu}\scriptscriptstyle\top\mspace{-1mu}}}
\newcommand{\sstack}[1]{{\substack{#1}}}
\DeclareMathOperator{\pr}{pr}
\newcommand{\pii}{{\pi i}}
\newcommand{\nfrac}[2]{{\nicefrac{#1}{#2}}}
\newcommand{\CC}{\mathbb{C}}
\newcommand{\NN}{\mathbb{N}}
\newcommand{\QQ}{\mathbb{Q}}
\newcommand{\RR}{\mathbb{R}}
\newcommand{\ZZ}{\mathbb{Z}}
\DeclareMathOperator*{\Exp}{\mathbb{E}}
\newcommand{\eps}{\varepsilon}
\newcommand{\ceil}[1]{\lceil{#1}\rceil}
\newcommand{\abs}[1]{{\lvert{#1}\rvert}}                        \newcommand{\Nm}[1]{{\lVert{#1}\rVert}}
\newcommand{\absB}[1]{{\Bigl\lvert{#1}\Bigr\rvert}}
\newcommand{\bra}[1]{{\langle #1 \rvert}}
\newcommand{\ket}[1]{{\lvert  #1 \rangle}}
\newlength{\algotabbingwidth}
\DeclareMathOperator{\cost}{cost}
\newcommand{\Fun}{\mathscr{K}}            
\newcommand{\place}{{\textrm{\tiny$\,\square\,$}}}
\DeclareMathOperator{\Diff}{Diff}
\newcommand{\Mu}{\mathrm{M}}
\DeclareMathOperator{\cvxcone}{cvxcone}
\theoremstyle{plain}
\newtheorem{computationalmethod}[theorem]{Computational Method}
\begin{document}
\title{Optimality of Finite-Support Parameter Shift Rules for Derivatives of Variational Quantum
  Circuits}%
\author{Dirk Oliver Theis %
  \\[1ex]
  \small Theoretical Computer Science, University of Tartu, Estonia\\
  \small \texttt{dotheis@ut.ee}%
}
\date{Tue 22 Mar 2022 11:36:19 AM CET
  \\
  Compiled: \today, \currenttime}
\maketitle

\begin{abstract}
  Variational (or, parameterized) quantum circuits are quantum circuits that contain real-number parameters, that need to be
  optimized/``trained'' in order to achieve the desired quantum-computational effect.  For that training, analytic derivatives
  (as opposed to numerical derivation) are useful.  Parameter shift rules have received attention as a way to obtain analytic
  derivatives, via statistical estimators.

  In this paper, using Fourier Analysis, we characterize the set of all shift rules that realize a given fixed partial
  derivative of a multi-parameter VQC.  Then, using Convex Optimization, we show how the search for the shift rule with smallest
  standard deviation leads to a primal-dual pair of convex optimization problems.

  We study these optimization problems theoretically, prove a strong duality theorem, and use it to establish optimal dual
  solutions for several families of VQCs.  This also proves optimality for some known shift rules and answers the odd open
  question.

  As a byproduct, we demonstrate how optimal shift rules can be found efficiently computationally, and how the known optimal
  dual solutions help with that.

  \par\medskip%
  \textbf{Keywords:} Parameterized quantum circuits, variational circuit optimization.
\end{abstract}

\setcounter{tocdepth}{3}
\tableofcontents

\section{Introduction}\label{sec:intro}
Near-term quantum algorithms meant to be run on pre-fault-tolerant quantum computing devices often employ the concept of
containing parameters, which are being optimized/``trained'' to maximize the utility of the quantum algorithm for the particular
application.  The concept is known, depending on the application area, as Variational Quantum Program, Parameterized Quantum
Circuit, Quantum Neural Network, etc (see \cite{Cerezo-Arrasmith-Babbush-Benjamin-etal:VQAs:2021} and the references therein).

Typical examples include the various types of Quantum Neural Networks for quantum machine learning applications (e.g.,
\cite{Mitarai-Negoro-Kitagawa-Fujii:qcirclearn:2018,Benedetti-Lloyd-Sack-Fiorentini:PQCs:2019}), Variational Quantum
Eigensolvers for electronic structure computations (e.g., \cite{Yuan-Suguru-Qi-Ying-Benjamin:theory-VQS:2019}), and, to varying
degrees, Quantum Approximate Optimization Algorithms~\cite{Farhi-Goldstone-Gutmann:QAOA:2014}.

One way of training the parameters of a variational or parameterized quantum circuit deploys some type of smooth optimization
algorithms such as (stochastic) gradient descent or second order methods or whatnot, in other words, it requires derivatives, be
they first order (e.g.,
\cite{Mitarai-Negoro-Kitagawa-Fujii:qcirclearn:2018,Schuld-Bergholm-Gogolin-Izaac-Killoran:eval-anal-grad:2019}), or higher
order (e.g., \cite{Mari-Bromley-Killoran:higher-order:2021, Stokes-Izaac-Killoran-Carleo:qnatgrad:2020}).  While, of course, the
standard formulas of numerical analysis yield approximations for derivatives, the interest is in so-call \textsl{analytic}
derivatives, which don't approximate (with some $\eps>0$ that has to be chosen carefully) the derivatives, but estimate the
exact derivative.  Tickling these analytic derivatives, of any order, out of the quantum circuit is the subject matter of this
paper.

Interestingly, the interest in efficient methods for estimating derivatives seems to have become more and more interesting,
e.g., \cite{Crooks:param-shift:2019, Hubregtsen-Wilde-Qasim-Eisert:grad-rules:2021, Banchi-Crooks:stoch-param-shift:2021,
  Kottmann-Anand-AspuruGuzik:diffUCC:2021, Izmaylov-Lang-Yen:algebraic-param-shift:2021,
  Wierichs-Izaac-Wang-Lin:gen-shift:2021,Kyriienko-Elfving:genqcdiff:2021}.

\mypar%
In this paper, we consider parameterized quantum circuits which, with parameters set to $x\in\RR^d$, apply a quantum operation
$\mathcal{E}(x)$ to an initial quantum state~$\varrho_0$, and in the resulting state, a fixed observable~$\mu$ is measured.  The
expectation value $f(x) := \tr( \mu \mathcal{E}(x)(\varrho_0))$ is the quantity of which one wishes to find the derivatives.  In
other words, we are interested in the (partial) derivatives of the \textit{expectation value function}
\begin{equation*}
  f\colon
  \RR^d \to \RR
  \colon
  x \mapsto \tr( \mu  \mathcal{E}(x)(\varrho_0)  ).
\end{equation*}

Moreover, we consider the situation where each of the parameters enter the parameterized quantum circuit via one-parameter
subgroups of unitary operators, i.e., mappings $U(\place)$ with $U(0)$ equal to the identity and for all $s,t\in\RR$, we have
$U(s+t) = U(s)U(t)$.  Each parameter of the PQC could appear in one or several places of the quantum circuit, with the same or
different one-parameter subgroups.  One-parameter subgroups are, of course, of the form $U(\place) = e^{2\pii \place H}$ for a
Hermitian operator~$H$.  (Note the author's idiosyncratic scaling factor $2\pi$, that matches the standard Fourier transform in
Harmonic Analysis.)

This is the setting that seems to be considered in the majority of derivative estimation research, although other settings
exist, e.g., the ``stochastic'' parameter shift rules of~\cite{Banchi-Crooks:stoch-param-shift:2021}, which applies to
parameterized unitaries of the form $t\mapsto e^{i(tA+B)}$.  Our results don't cover these types of parameterized unitaries.

To our knowledge, analytic derivatives of Parameterized Quantum Circuits have been introduced into near-term quantum
programming in~\cite{Mitarai-Negoro-Kitagawa-Fujii:qcirclearn:2018}.

Soon after~\cite{Mitarai-Negoro-Kitagawa-Fujii:qcirclearn:2018}, the trivial observation was made that, in full generality, the
expectation value functions of parameterized quantum circuits into which the parameters enter through one-parameter subgroups
can be characterized in terms of their Fourier properties
\cite{GilVidal-Theis:CalcPQC:2018,Crooks:param-shift:2019,gilvidal-theis:inpu-redu:2019}.  This fact can be exploited directly
for the optimization/training of the PQC \cite{GilVidal-Theis:CalcPQC:2018,Ostaszewski-Grant-Benedetti:rotosolve:2021}, but it
also has bearing on shift rules for derivatives, most notably in
\cite{Wierichs-Izaac-Wang-Lin:gen-shift:2021,Kyriienko-Elfving:genqcdiff:2021}, the main references of the present paper.

\mypar%
Indeed, the present paper applies Harmonic Analysis and Convex Optimization to prove the optimality, in terms of the standard
deviation of the resulting estimator, of some parameter shift rules.  Unfortunately, the shift rules for which we prove
optimality are \textsl{mostly} known from~\cite{Wierichs-Izaac-Wang-Lin:gen-shift:2021}.  However, our paper adds the certainty
that no better shift rules (in terms of stddev) exist.

\paragraph{Basic approach.} %
We find that things get simpler if we abstract from the quantum-circuit origin of the functions and work with their Fourier
analytic properties: Expectation value functions are real valued, and for a parameterized quantum circuit with~$d$ parameters
(it's OK to count only parameters participating in the partial derivative), there exists a finite set $\Xi\subset\RR^d$ which
contains the Fourier spectrum of the expectation value function.  As the function is real valued, the set~$\Xi$ is
\textit{symmetric,} meaning:
\begin{equation}\label{def:Xi-symmetric}
  \Xi = -\Xi \text{ where $-\Xi := \{-\xi\mid\xi\in\Xi\}$.}
\end{equation}
We refer to finite, symmetric sets~$\Xi\subset\RR^d$ as \textit{frequency sets} of PQCs.

While every such set can occur as the support of the expectation value function of a Parameterized Quantum Circuit (see
Proposition~\ref{prop:overview:FourierComputability} below) we acknowledge that, for $d>1$, only a small subclass of frequency
sets are truely of interest, mostly because a (small) set~$\Xi$ containing the Fourier spectrum has to be somehow read off from
the PQC before shift rules can be constructed and applied.  See \S\ref{ssec:overview:fn-spaces} below for (mostly known)
examples of frequency sets of expectation value functions of PQCs which can be found by inspecting the parameterized quantum
circuit.

Before we summarize some of the contributions of this paper, we draw the reader's attention to Table~\ref{tab:notations} which
lists the notations frequently used in this paper, for easy reference.

\begin{table}[th]\sffamily\small\centering
  \begin{tabular}{lp{.75\linewidth}}
    \bfseries Notation       & \bfseries Description
    \\[1ex]
    \hline
    \\
    $\NN := \{1,2,3,\dots\}$ &
    \\[1ex]
    $d\in\NN$                & Number of parameters of the function / parameterized quantum circuit.
    \\[1ex]
    $\alpha\in\NN^d$         & The order of the multi-variable partial derivative.  The reason why we don't allow $\alpha_j=0$ for some $j$'s is just convenience: We can restrict to the parameters that contribute in the partial derivative.
    \\[1ex]
    $\partial^\alpha = \prod_{j=1}^d \partial_j^{\alpha_j}$
    {}                       & Partial derivative in the math sense, see \href{https://en.wikipedia.org/wiki/Partial_derivative}{Wikipedia}.
    \\[1ex]
    $\abs{\alpha}$           & For $\alpha\in\NN^d$: \quad $\abs{\alpha}:=\sum_{j=1}^d \alpha_j$.
    \\[1ex]
    $\Xi \subset\RR^d$       & A finite set (of frequencies), elements are typically $\xi\in\Xi$
    \\[1ex]
    $\Xi_+$                  & ``Positive'' part of $\Xi$: Arbitrary but fixed set satisfying the relation $\Xi = \{0\}\cup \Xi_+ \cup (-\Xi_+)$, where the union is disjoint.  In case $d=1$, we think of $\Xi\_\subset\lt]0,\infty\rt[$
    \\[1ex]
    $A\subset \RR^d$         & A (finite) set of points, elements are typically $a\in A$.  The reader will understand the reason for writing $f(a)$ for (nice)
                               functions $f\colon\RR^d\to\RR$.  We appreciate that writing both $f(x)$ and $f(a)$, for the same function~$f$, must be terribly
                               confusing for physicists.
    \\[1ex]
    $\place$                 & Placeholder in expressions to simplify notations for functions.  The placeholder is where you plug the variable, which as no name yet.  E.g., as already used above,
                               $e^{2\pii\place H}$ stands for $v\mapsto e^{2\pii v H}$.  (There's no preference for a particular letter; to emphasize that, with~$v$ we deliberately made a weird choice.)
    \\[1ex]
    $u\bullet v$             & $:= \sum_j u_j v_j$.
    \\[1ex]
    $\Fun_\Xi$               & Real vector space of bounded real-valued functions whose Fourier spectrum is contained in the
                               finite set~$\Xi$; i.e., functions of the form $\sum_{\xi\in\Xi} b_\xi e^{2\pii\xi\bullet x}$
                               with $b_{-\xi}=b_\xi^*$ for all $\xi\in\Xi$.
  \end{tabular}
  \caption{\label{tab:notations}\textbf{Overview of notations.}  %
    \small%
    Habits of notation in Physics and Math / Theoretical Computer Science are very different, which is why we list the notations
    which are used a throughout this paper, particularly the arcane ones.}
\end{table}

\paragraph{The contributions of this paper.} %
(1) Using elementary Harmonic Analysis, for any $d\in\NN$ and any given finite symmetric set $\Xi\subset\RR^d$, and for fixed
$\alpha\in\NN^d$, we characterize the set of all shift rules which realize the partial derivative\footnote{In the math sense,
  \href{https://en.wikipedia.org/wiki/Partial_derivative}{en.wikipedia.org/wiki/Partial\_derivative}.}
$\partial^\alpha := \prod_{j=1}^d \partial_j^{\alpha_j}$ for every real-valued function with Fourier spectrum contained
in~$\Xi$.  We refer to these shift rules as ``feasible''.  As we explain in~\S\ref{ssec:overview:sys-eqns}, at the heart of our
characterization is a finite, underdetermined linear system of equations, not unlike, but more general than those that occur in
\cite{Wierichs-Izaac-Wang-Lin:gen-shift:2021, Kyriienko-Elfving:genqcdiff:2021} where the systems are used to search for shift
rules.

\mypar%
Our more general characterization is necessary because it allows us to (2) realize the search for the optimal shift rule (among
all possible shift rules) as a the problem of minimizing a convex function over that set
(\S~\ref{ssec:overview:primal+dual+weakduality}).

\mypar%
The word ``optimal'', of course, requires the concept of a ``cost'' of a shift rule.  We use a cost measure closely related to
(the same, really) one proposed in~\cite{Wierichs-Izaac-Wang-Lin:gen-shift:2021} and called the ``number of shots'' there.  It
can be understood as (2a) an (asymptotic) standard deviation of an estimator in the worst case over all parameterized quantum
circuits; or, (2b) for the Functional/Numerical Analyst, as the operator norm of the shift rule when viewed as a linear operator
between the obvious normed spaces.  As usual, the latter perspective allows us to view the ``cost'' as the worst-case bound on
how noise in the input function (e.g., sampled estimate of the expectation value function) is blown up in the output function
(e.g., estimate of the derivative of the expectation value function).

As a nice coincidence, our cost measure can also be understood (2c) as a heuristic to minimize the quantity that
in~\cite{Wierichs-Izaac-Wang-Lin:gen-shift:2021} is called the ``number of unique circuits'', i.e., the number of ``shift
vectors'' used --- we refer to it as the size of the support of the shift rule.

See \S\ref{ssec:overview:cost} for the formal definition of our cost function, and for discussions related to (2a,2b,2c).

We emphasize that in this paper the phrase ``cost of a shift rule'' refers to one fixed mathematical concept (with three
different justifications) --- we do not switch between different resource requirement concepts.  However, we also emphasize that
much of the technical toolkit we develop carries through to other cost functions: For other ``linearizable'' cost functions (our
cost function has that property, see \S\ref{ssec:overview:computation:primal}) most of our tools still work; for general convex
cost functions the duality concept (see below) will suffer; for cost functions that require on/off decisions (such as for the
provably-optimal minimization of the size of the support), we exit the land of comforting convexity, and enter the realm of NP
hardness.

\mypar%
As we are interested only in ``finite'' shift rules, i.e., where the size of the support is finite, the resulting optimization
problem has a finitness condition which, ostensibly, falls out of the scope of standard convex optimization.  Nonetheless, (3)
we establish a perfectly standard ``dual'' convex optimization problem
(\S\ref{ssec:overview:primal+dual+weakduality:weakduality}), and prove two strong duality theorems: A pair of primal and dual
solutions are both optimal, if, and only if, (4a) so-called complementary slackness holds, (4b) the objective function values
are the same; see \S\ref{ssec:overview:strong-duality}.

\mypar%
(5) \textsl{En passant} in~\S\ref{ssec:overview:computation}, we outline how, in the cases where optimal shift rules cannot be
found on paper, standard computational convex optimization techniques when applied to the dual convex program yield optimal
solutions to it, and how, from these dual-optimal solutions, optimal finite feasible shift rules can be obtained
computationally.  The algorithms are straightforward consequences of our work on the duality theory (2,3,4).

\begin{computationalmethod}
  Let $d\in\NN$ be a fixed parameter.

  There exists an efficient algorithm for the following task:
  \begin{quote}
    Given as input a finite symmetric set $\Xi\subset\QQ^d$ and $\alpha\in\NN^d$, and a precision parameter $\eps>0$, return a
    near-optimal near-feasible shift rule (where ``near'' is controlled through the precision parameter).
  \end{quote}
  Moreover, heuristically, in cases where there are several optimal shift rules, one can hope for a shift rule with small
  support.
\end{computationalmethod}

We also have a simpler method for the cases when optimal dual solutions can be obtained analytically (which happens often, see
below): There, a single Linear Program has to be solved (\S\ref{ssec:overview:computation:primal-w-knowledge}).

The computational cost of the methods depends on $d$ and~$1/\eps$.

The precision parameter is always necessary, firstly simply because non-rational numbers are involved due to the trigonometric
functions that occur, and secondly because the theoretical convergence results require them.  While everything else in this
paper is (meant to be) fully rigorous, we will not prove that our algorithm(s) run in polynomial time (for\footnote{%
  The reason why~$d$ has to be fixed is that the methods involve minimizing a function with Fourier spectrum contained
  in~$\Xi\subset\QQ^d$, which can be done efficiently only for fixed~$d$.  } %
fixed~$d$): They are typical, standard approaches in mathematical optimization that are widely used both in practice
(cutting-plane/column generation methods) and in computational complexity theory (proofs of poly-time running times based on the
Ellipsoid method).

Instead, we invite the reader to check out the supplementary source code written in the mathematical computation programming
language Julia\footnote{\texttt{\href{https://julialang.org/}{julialang.org}}} demonstrating the methods in the form of
Pluto\footnote{\texttt{\href{https://plutojl.org}{plutojl.org}}} notebooks; see Appendix~\ref{apx:source-code}.  We point out
that the source code is only for demonstration of the algorithms, and will run into numerical instabilities when the frequency
sets get a little bit nasty.

\mypar%
In \cite{Wierichs-Izaac-Wang-Lin:gen-shift:2021, Kyriienko-Elfving:genqcdiff:2021}, searches for shift rules based on more
restricted (linear) system characterizations are performed manually.  But to our knowledge, there seems to be no method to
efficiently computationally find cost-wise (near-) optimal shift rules for non-trivial frequency sets (even in dimension $d=1$),
and they don't seem to be known analytically, either.

\mypar%
(6) Finally, we use the duality theorems to prove optimal shift rules for some situations arising from parameterized quantum
circuits, see \S\ref{ssec:overview:the-list}.

For that, we need two distinguish two types of frequency sets --- ones that generate lattices in~$\RR^d$, which we call
\textit{lattice generating}, and ones that generate dense subgroups of sub-vector spaces of $\RR^d$.  We'll elaborate on that
distinction (known to every mathematician) below (\S\ref{ssec:overview:the-list}).  For now we just mention that (a) a function
is periodic if, and only if, its Fourier spectrum is lattice generating (Appendix~\ref{apx:miscmath:lattices-and-periodicity});
(b) a set is lattice generating if, and only if, after a change of basis (``re-scaling''), it is integral (mathematical
folklore); (c) for example, for $d=1$, if every element of the frequency set is a rational multiple of a fixed real constant,
then the frequency set is lattice generating.

(6a) For any $d,\alpha, \Xi$, we prove that a finite feasible \textsl{optimal} shift rule exists, and Wierichs et al.'s quantity
``number of unique circuits'' (what we call the size of the support of the shift rule) is at most~$\abs{\Xi}$
(\S\ref{ssec:overview:ex-opt}).  (Examples are known that show that the bound is tight.)

(6b) For $d=1$, any frequency set $\Xi\subset\RR$, and any\footnote{In this paper, we give the complete proof only for
  $\alpha\le 4$ and we refer to a forthcoming paper for the proof of a technical lemma for $\alpha>4$.} $\alpha\in\NN$, we
quantify the cost of optimal shift rules: It is $(2\pi\max\Xi)^\alpha$.  In the simplest case, if the frequency set has the form
(for a positive integer~$m$ and a positive real number~$\xi_1$)
\begin{equation}\label{eq:simplest-Xi}
  \Xi = \{ k\xi_1 \mid k\in\ZZ, \abs{k} \le m\},
\end{equation}
we prove the optimality of the shift rule that was given in~\cite{Wierichs-Izaac-Wang-Lin:gen-shift:2021} for that case.  But,
for all other lattice generating~$\Xi$, we prove that the corresponding shift rule is still optimal in terms of its
\textsl{cost!}  In other words, it is impossible for a shift rule to exploit ``gaps'' in the set of frequencies to obtain lower
cost.  This closes an issue that was left open in~\cite{Wierichs-Izaac-Wang-Lin:gen-shift:2021}.  However, regarding the size of
the support, in view of (6a), it is clear that for sparse frequency sets, there are optimal shift rules with different
support sizes.  As outlined above, our \textsl{computational} method can be heuristically hoped to find a shift rule with small
support.

Particularly for $d=1$, we show that an optimal shift rule exists whose support size is at most $\abs{\Xi}-(\alpha-1)\%2$.  This
means that there are optimal shift rules among those described in general forms in
in~\cite{Wierichs-Izaac-Wang-Lin:gen-shift:2021,Kyriienko-Elfving:genqcdiff:2021}.  We point out that our simple formula for the
cost of optimal finite-support shift rules holds in full generality, i.e., also for non-lattice-generating~$\Xi$.

(6c) For $d\ge 2$ our results are not comprehensive.  We quantify the cost of optimal shift rules only for frequency sets
satisfying a restrictive technical condition (``pointy'').  The condition includes the case when~$\Xi$ is a product set, which
is the only one that if considered in~\cite{Wierichs-Izaac-Wang-Lin:gen-shift:2021} --- it is questionable, though, whether more
complicated frequency sets can be identified \textsl{a priori,} by inspecting the parameterized quantum circuit or performing a
small number of expectation-value estimations that probably shouldn't depend on any other parameters.

Note that in this paper, we don't consider savings that can be gained by computing several partial derivatives, e.g., a complete
Hessian, as is done in~\cite{Wierichs-Izaac-Wang-Lin:gen-shift:2021}.

The results (6b) and (6c), described in detail in \S\ref{ssec:overview:the-list:d=1} and~\S\ref{ssec:overview:the-list:d-ge-2}
resp., are obtained by establishing a dual-optimal solution --- we consider these dual-optimal solutions to be a major
contribution of this paper.  Once we have dual-optimal solutions, using our other theorems, the (existence and) cost of optimal
shift rules can easily be derived, and one of our computational methods (see above) demonstrates a possible simple way to
computationally find a heuristically-sparse optimal shift rule even for rugged frequency sets, based on the dual-optimal
analytic solution.

\paragraph{About non lattice generating frequency sets.}
It is prudent to point out a danger of misunderstanding~\cite{Wierichs-Izaac-Wang-Lin:gen-shift:2021}: There, the innocent
reader might think of a dichotomy between an equi-spaced Fourier spectrum $\{k\xi_1 \mid k=-K,\dots,K\}$ on the one hand, and
non-periodicity on the other.  While functions with equi-spaced Fourier spectrum are periodic, the converse is, of course, not
true.  Indeed, the periodicity of the function is equivalent to the Fourier spectrum generating a lattice.

Wierichs et al.~(Appendix~B.1) talk about frequencies which cannot be scaled to be integral, and use the phrase \textsl{not
  periodic} in that context, in the case $d=1$.  It is not clear to the author of the current paper what they mean.

Note also that a frequency set that is not lattice generating cannot be represented in a computer: A representation of the
vectors/points in any basis will always require coefficients that are not rational.


Why bother with non-lattice-generating Fourier spectra at all?  Clearly, from a practical point of view, all frequency sets are
lattice generating, and the treatment of non-lattice-generating frequency sets brings some difficulty.  (For example, it is not
clear whether an optimal feasible shift rule exists, i.e., whether the infimum of the costs is not attained.)

The reason is that lattice generating frequency sets can sometimes blow up the complexity.  Consider, say, 10 31-bit unsigned
fixed-point numbers $\xi_1,\dots,\xi_{10} \in \lt]0,2\rt[$ chosen uniformly at random, and consider a corresponding (random)
real-valued function that has, as its Fourier spectrum, the set $\Xi := \{0\} \cup \{ \pm\xi_j \mid j=1,\dots,10 \}$.  Clearly,
$\Xi$ is lattice generating, but the probability that its period is smaller than $2^{29}$ is at roughly one in a million.
Hence, treating this frequency set as lattice generating, while correct, does not make the problem of finding a finite optimal
shift rule feasible.  Considering non-lattice-generating frequency sets, on the other hand, and trading elusive optimality (on
account of non-representability in a computer memory) for near-optimality (off by at most an arbitrary given $\eps>0$),
immediately leads to tools such as simultaneous Diophantine approximation (and so to continued fractions, LLL algorithm, etc).

While that offer sufficient motivation, we believe, for not excluding non-lattice-generating frequency sets, in the present
paper, we do not pursue the algorithmic approaches that they inspire.

\paragraph{Outline of the paper.} %
In the next section, describe the function spaces we are working with.
In Section~\ref{ssec:overview:def-feasible-shift-rule}, we define (feasible) shit rules rigorously.
In Section~\ref{ssec:overview:sys-eqns}, we give the characterization of the set of feasible shift rules.
In Section~\ref{ssec:overview:cost}, we define the cost of a shift rule, and show the equivalence of~3 possible definitions.
In Section~\ref{ssec:overview:primal+dual+weakduality}, we define the ``primal'' and ``dual'' optimization problems for shift
rules and prove the Weak Duality Theorem.
In Section~\ref{ssec:overview:computation}, we interject a discussion on various aspects of the optimization problems that can
be computed in practice (and, indeed, in theory).
In Section~\ref{ssec:overview:strong-duality}, we state the Strong Duality and Complementary Slackness theorems for shift rules.
In Section~\ref{ssec:overview:ex-opt}, we discuss the existence of optimal finite feasible shift rules.
In Section~\ref{ssec:overview:the-list}, we give concrete results for sets~$\Xi$ that have been discussed in the literature.
We wrap things up in the concluding Section~\ref{sec:conclu}.

\section{Function spaces from expectation values of parameterized quantum circuits}\label{ssec:overview:fn-spaces}
Fix a $d\in\NN$, and let $\Xi$ be a fixed finite, symmetric subset of $\RR^d$.  We define the
vector space of functions
\begin{equation*}
  \Fun_\Xi
  :=
  \bigl\{   f\colon\RR^d\to\RR \bigm| f \text{ smooth and bounded } \Supp \hat f\subseteq\Xi \}
\end{equation*}
The Fourier transform of the bounded smooth function here is, technically, that of tempered distributions, but the finitness of
the spectrum along with the boundedness renders this machinery trivial: Indeed, we have
\begin{equation*}
  \Fun_\Xi
  =
  \bigl\{   f\colon x\mapsto \sum_{\xi\in\Xi} b_\xi e^{2\pii\xi\bullet x} \bigm| b\in\CC^\Xi \text{ with } b_{-\xi}=b_\xi^* \bigr\};
\end{equation*}
with $u\bullet v := \sum_j u_j v_j$.

\begin{quote}
  We emphasize our conventions for the Fourier transform:
  $\hat f(\xi) = \int e^{-2\pii\xi\bullet x} f(x)\, dx$ --- the crucial part being the $2\pi$ in
  the exponent!  This choice goes hand in hand with our decision to normalize Hamiltonian
  evolution to $t\mapsto e^{2\pii H\cdot t}$.  It leads to factors of powers of $2\pi$ in the
  derivatives and hence the standard deviations of the estimators of the derivatives.
\end{quote}

As indicated above, we find it adds clarity to the subject matter to think in terms of function
spaces rather than in terms of Hamiltonians or parameterized quantum circuits.
As this also adds some generality that is not required to address the currently known applications, we find it prudent to
restrict to function spaces which arise from practically relevant parameterized quantum circuits \textsl{whenever that simplifies
  the mathematics.}  However, we refer to the Proposition~\ref{prop:overview:FourierComputability} below
(\S\ref{ssec:overview:fn-spaces:FourierComputability}) as a justification for working with function spaces with arbitrary
frequency sets: Every function can all be realized as the expectation value function of a parameterized quantum circuit.

We now go through some common examples of parameterized quantum circuits and their function spaces.  What follows is well-known
and treated in detail elsewhere, e.g., \cite{Wierichs-Izaac-Wang-Lin:gen-shift:2021}.

\subsection{A single Hamiltonian}
The fact that evolution under a Hamiltonian produces a Fourier spectrum consisting of the
differences between the eigenvalues has been observed long ago, e.g.,
\cite{gilvidal-theis:inpu-redu:2019, GilVidal-Theis:CalcPQC:2018}.

For easy reference, we pack the statement into a proposition.  For convenience, let us define the \textit{difference set}
\begin{equation}\label{eq:def:Diff-op}
  \begin{split}
    \Diff\Lambda &:= \{\lambda' - \lambda \mid \lambda,\lambda'\in\Lambda\} \text{ for $\Lambda\subset\RR^d$;}
  \end{split}
\end{equation}
and the spectrum $\Spec(A)$, the set of eigenvalues of an operator~$A$.

\begin{proposition}\label{prop:overview:fou-spec}
  Consider a parameterized quantum circuit on~$n$ qubits of the following form.
  Let~$\varrho$ be a $n$-qubit density matrix, and~$\mu$ an $n$-qubit observable.  Let~$H$ be a
  Hermitian operator that acts on an arbitrary subset of the~$n$ qubits.  Consider the expectation
  value function
  \begin{equation}\label{def:exp-val-fun:1d}
    f\colon\RR\to\RR\colon
    x \mapsto \tr(\mu \; e^{-2\pi i H\cdot x}\, \varrho \, e^{2\pi i H\cdot x} ).
  \end{equation}
  The Fourier spectrum of~$f$ is contained in the set $\Diff\Spec(H)$.
\end{proposition}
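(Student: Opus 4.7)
The proof will follow the standard route of diagonalizing the Hamiltonian and reading off the frequencies from the resulting exponentials. First I would fix a spectral decomposition $H = \sum_{\lambda\in\Spec(H)} \lambda\, P_\lambda$, where each $P_\lambda$ is the orthogonal projector onto the $\lambda$-eigenspace (acting as the identity on the qubits that~$H$ does not touch). Functional calculus then gives
\begin{equation*}
  e^{2\pi i H x} \;=\; \sum_{\lambda\in\Spec(H)} e^{2\pi i \lambda x}\, P_\lambda,
  \qquad
  e^{-2\pi i H x} \;=\; \sum_{\lambda\in\Spec(H)} e^{-2\pi i \lambda x}\, P_\lambda.
\end{equation*}

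Substituting into \eqref{def:exp-val-fun:1d} and expanding the two sums, cyclicity/linearity of the trace yields
\begin{equation*}
  f(x) \;=\; \sum_{\lambda,\lambda'\in\Spec(H)} e^{2\pi i (\lambda'-\lambda) x}\,\tr\bigl(\mu\, P_\lambda\, \varrho\, P_{\lambda'}\bigr),
\end{equation*}
which exhibits $f$ as a finite trigonometric sum whose frequencies all lie in $\{\lambda'-\lambda \mid \lambda,\lambda'\in\Spec(H)\} = \Diff\Spec(H)$. Grouping coefficients with the same value of $\xi := \lambda'-\lambda$ gives $f$ in the form $\sum_{\xi} b_\xi e^{2\pi i \xi x}$ with $b_\xi = \sum_{\lambda'-\lambda=\xi}\tr(\mu P_\lambda \varrho P_{\lambda'})$, and the desired containment $\Supp\hat f\subseteq \Diff\Spec(H)$ follows from the uniqueness of Fourier coefficients of almost-periodic functions (or simply by viewing this as the defining representation of a function in~$\Fun_{\Diff\Spec(H)}$ after symmetrizing).

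The main, essentially cosmetic, thing to check is that $\Diff\Spec(H)$ is itself symmetric (which is immediate: $\lambda-\lambda' = -(\lambda'-\lambda)$) and that $f$ is real valued, so that the resulting coefficients satisfy $b_{-\xi}=b_\xi^*$; this is automatic because $\mu$, $\varrho$, and $H$ are Hermitian, making $f$ real. There is no serious obstacle: the only subtlety is that~$H$ may act on a proper subset of the qubits, but the spectral decomposition extends to the full $n$-qubit space by tensoring with the identity on the remaining qubits, and the argument is unchanged. No bound on $|\Xi|$ is claimed beyond inclusion, which is fortunate since the number of distinct differences $\lambda'-\lambda$ can be strictly smaller than $|\Spec(H)|^2$ when the spectrum has arithmetic structure.
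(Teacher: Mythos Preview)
Your proof is correct and follows essentially the same route as the paper: spectral-decompose~$H$, expand the two exponentials, and read off the frequencies $\lambda'-\lambda$ from the resulting double sum. The additional remarks you make (symmetry of $\Diff\Spec(H)$, real-valuedness, $H$ acting on a subregister) are valid but not needed for the bare inclusion claim.
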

\begin{proof}
  Abbreviating $\Lambda := \Spec(H)$ and replacing~$H$ by its spectral decomposition
  $H = \sum_{\lambda\in\Lambda} \lambda P_\lambda$, where the $P_*$ are the spectral projectors, we find, for every $x\in\RR$,
  \begin{multline*}
    \tr(\mu \; e^{-2\pi i H\cdot x}\, \varrho \, e^{2\pi i H\cdot x} )
    =
    \sum_{\lambda',\lambda\in\Lambda}
    \tr(\mu e^{-2\pii\lambda x} P_\lambda \varrho e^{2\pii\lambda' x} P_{\lambda'})
    \\
    =
    \sum_{\lambda',\lambda\in\Lambda}
    \tr(\mu P_\lambda \varrho P_{\lambda'}) \cdot e^{2\pii(\lambda'-\lambda)x}.
  \end{multline*}
\end{proof}

The function space associated with the case of Proposition~\ref{prop:overview:fou-spec} is $\Fun_{\Diff\Spec(H)}$, the space of
all real-valued functions whose Fourier spectum is contained in $\Diff\Spec(H)$.

\subsection{Several Hamiltonians}
In the multi-variate case, Hamiltonians $H_1,\dots,H_d$ can be paired with variables
$x_1,\dots,x_d$, respectively.  For example, for arbitrary quantum operations
$\mathcal{E}_1,\dots,\mathcal{E}_{d-1}$, and denoting by $[U]$ ($U$ a unitary) the Kraus-form
quantum operation $\varrho\mapsto U\varrho U^\dag$, we might have
\begin{multline}\label{def:exp-val-fun:any-d}
  f\colon\RR^d\to\RR\colon
  \\
  x \mapsto \tr(\mu \, \cdot \,
  \lt(
  \bigl[ e^{-2\pi i H_d\cdot x_d} \bigr]
  \circ
  \mathcal{E}_{d-1}
  \circ
  \bigl[ e^{-2\pi i H_{d-1}\cdot x_{d-1}} \bigr]
  \circ
  \dots
  \circ
  \mathcal{E}_{1}
  \circ
  \bigl[ e^{-2\pi i H_1\cdot x_1} \bigr]
  \rt)
  (\varrho) ).
\end{multline}
Applying Proposition~\ref{prop:overview:fou-spec} $d$ times yields, for the Fourier spectrum, the set
\begin{equation*}
  \Diff\Spec(H_1)\times\dots\times\Diff\Spec(H_d).
\end{equation*}
We can add a little more color by applying a linear transformation, i.e., ``$x=Au$'' for an arbitrary real $(d\times e)$-matrix
$A$.

\begin{proposition}\label{prop:overview:fou-spec-multi}
  If~$f$ is as in~\eqref{def:exp-val-fun:any-d}, and~$A$ is a real $(d\times e)$-matrix, then the function
  $g\colon u\mapsto f(Au)$ has Fourier spectrum contained in the set
  \begin{equation*}
    \{ A^\T\xi \mid \xi\in\Diff\Spec(H_1)\times\dots\times\Diff\Spec(H_d) \}.
  \end{equation*}
\end{proposition}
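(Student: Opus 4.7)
The plan is to first establish the product-form Fourier expansion of $f$ itself, and then observe that precomposition with a linear map acts on frequencies by the transpose. The linear-transformation step is essentially a one-line calculation, so the content lies in handling the multivariate expansion.

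To get the Fourier expansion of $f$ in all $d$ variables simultaneously, I would argue by induction on $d$, peeling off one Hamiltonian at a time in the fashion of the proof of Proposition~\ref{prop:overview:fou-spec}. Concretely, fix $x_2,\dots,x_d$ and view the value of~$f$ at $(x_1,\dots,x_d)$ as a trace $\tr(\tilde\mu(x_2,\dots,x_d) \, e^{-2\pii H_1 x_1}\, \varrho\, e^{2\pii H_1 x_1})$, where $\tilde\mu(x_2,\dots,x_d)$ is the Heisenberg-evolved effective observable obtained by pushing $\mu$ back through $[e^{-2\pii H_d x_d}]\circ \mathcal{E}_{d-1}\circ\cdots\circ\mathcal{E}_1$. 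Applying the spectral-decomposition computation from the proof of Proposition~\ref{prop:overview:fou-spec} in the~$x_1$ variable yields
\begin{equation*}
  f(x_1,\dots,x_d) = \sum_{\lambda_1'-\lambda_1\in\Diff\Spec(H_1)}
  c_{\lambda_1,\lambda_1'}(x_2,\dots,x_d)\, e^{2\pii(\lambda_1'-\lambda_1)x_1},
\end{equation*}
where each coefficient $c_{\lambda_1,\lambda_1'}(x_2,\dots,x_d)$ is again a trace of the same multi-Hamiltonian form but involving only $H_2,\dots,H_d$. By induction, each such coefficient admits a Fourier expansion whose spectrum is contained in $\Diff\Spec(H_2)\times\cdots\times\Diff\Spec(H_d)$. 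Combining, we obtain a finite expansion
\begin{equation*}
  f(x) = \sum_{\xi\in\Diff\Spec(H_1)\times\cdots\times\Diff\Spec(H_d)} b_\xi\, e^{2\pii\, \xi\bullet x},
\end{equation*}
which is exactly the statement made between Proposition~\ref{prop:overview:fou-spec} and the proposition to be proved.

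For the linear change of variable, substitute $x=Au$ and use the identity $\xi\bullet(Au)=(A^\T\xi)\bullet u$. This transforms each term $b_\xi e^{2\pii\, \xi\bullet x}$ into $b_\xi e^{2\pii\,(A^\T\xi)\bullet u}$, giving a finite expansion of $g(u)=f(Au)$ whose spectrum is contained in $\{A^\T\xi \mid \xi\in\Diff\Spec(H_1)\times\cdots\times\Diff\Spec(H_d)\}$, as desired.

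The only place where any care is required is the inductive step: one has to verify that, after fixing $x_1$, the residual object $c_{\lambda_1,\lambda_1'}(x_2,\dots,x_d)$ genuinely fits the inductive hypothesis. Because the coefficients come from $\tr(\tilde\mu(x_2,\dots,x_d)\, P_{\lambda_1}\varrho P_{\lambda_1'})$ and because $\tilde\mu(x_2,\dots,x_d)$ is built by conjugating the fixed observable $\mu$ with the remaining Hamiltonian evolutions interleaved with the quantum operations $\mathcal{E}_j$ (now viewed in the Heisenberg picture), this is routine but is the only real book-keeping issue; otherwise the proposition reduces to the trivial transpose identity.
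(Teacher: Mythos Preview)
Your proposal is correct and follows the same approach as the paper: take the multivariate Fourier expansion of~$f$ over $\Diff\Spec(H_1)\times\cdots\times\Diff\Spec(H_d)$ as given (the paper states this in the text preceding the proposition as the result of ``applying Proposition~\ref{prop:overview:fou-spec} $d$~times''), then use $\xi\bullet(Au)=(A^\T\xi)\bullet u$ and regroup terms by their image under~$A^\T$. You have simply spelled out the iterated application of Proposition~\ref{prop:overview:fou-spec} as an explicit induction, which the paper's one-line ``direct verification'' sketch leaves implicit.
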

\begin{proof}[Sketch of proof.]
  Direct verification.  With $\Xi_f := \Diff\Spec(H_1)\times\dots\times\Diff\Spec(H_d)$, and
  $\Xi_g := \{A^\T\xi\mid\xi\in\Xi\}$, there are complex numbers $b_\xi$, $\xi\in\Xi$, such that
  \begin{equation*}
    g(u)
    = f(Au)
    = \sum_{\xi\in\Xi_f} b_\xi \, e^{2\pii \xi\bullet(Au)}
    = \sum_{\xi\in\Xi_f} b_\xi \, e^{2\pii (A^\T\xi)\bullet u}
    = \sum_{\eta\in \Xi_g} \biggl(\sum_\sstack{\xi\in\Xi\\A^\T\xi=\eta}  b_\xi \biggr)
    \; e^{2\pii \eta \bullet u}.
  \end{equation*}
\end{proof}

The function space associated with the case of Proposition~\ref{prop:overview:fou-spec-multi} is $\Fun_{\Xi_g}$ (re-using the
notation from the proof), the space of all real-valued functions each of whose Fourier frequencies is of the form $A^\T\xi$ for
some $\xi\in \Diff\Spec(H_1)\times\dots\times\Diff\Spec(H_d)$.

\subsection{Quantum circuits for arbitrary frequency sets}\label{ssec:overview:fn-spaces:FourierComputability}
We give the following quite obvious result as an argument in support of the generality in which the present paper is written. No
attempt has been made to minimize the resources.  The proof is in Appendix~\ref{apx:FourierComplexity}.

\begin{proposition}\label{prop:overview:FourierComputability}
  Given a $d\in\NN$, $\Xi\subset\RR^d$ finite, symmetric, and a $f\colon \RR^d\to\RR$ with Fourier spectrum equal to~$\Xi$,
  there exists a parameterized quantum circuit on at most $d\log_2(\abs{\Xi}+1)$ qubits and with~$d$ parameters, whose
  expectation value function equals~$f$.
\end{proposition}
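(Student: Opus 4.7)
The plan is to engineer an explicit PQC realizing any prescribed real-valued $f(x) = \sum_{\xi\in\Xi} b_\xi e^{2\pii\xi\bullet x}$ (with $b_{-\xi} = b_\xi^*$) as its expectation value function, using $d$ blocks of $n := \lceil\log_2(|\Xi|+1)\rceil$ qubits each. Fix a positive half $\Xi_+$ so that $\Xi = \{0\} \dotcup \Xi_+ \dotcup (-\Xi_+)$. On each block $j\in\{1,\dots,d\}$, define a diagonal Hamiltonian $H_j$ whose spectrum contains $\{0\}\cup\{\xi_j : \xi\in\Xi\}$; this set has size at most $|\Xi|+1 \le 2^n$, so it fits (with dummy eigenvalues padding the rest). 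Insert these Hamiltonians as the $d$ one-parameter subgroups of the PQC, with all intermediate quantum operations $\mathcal{E}_j$ in~\eqref{def:exp-val-fun:any-d} set to the identity. Since the block Hamiltonians commute, the full unitary acts on each product eigenvector $|\lambda_1,\dots,\lambda_d\rangle$ as multiplication by $e^{-2\pii\lambda\bullet x}$.

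Next I would design the initial state $\varrho$ and observable $\mu$ so as to select exactly the frequencies in $\Xi$ with the prescribed Fourier coefficients. For each $\xi\in\Xi$ let $|\xi\rangle := |\xi_1\rangle\otimes\dots\otimes|\xi_d\rangle$, where $|\xi_j\rangle$ is the $\xi_j$-eigenvector of $H_j$ chosen in the previous step. Set $\varrho := |\psi\rangle\langle\psi|$ with $|\psi\rangle$ any unit vector in the $(|\Xi_+|+1)$-dimensional span of $\{|0\rangle\} \cup \{|\xi\rangle : \xi\in\Xi_+\}$, all of whose amplitudes on those basis vectors are nonzero; and choose $\mu$ Hermitian whose nonzero entries with respect to the full product eigenbasis are confined to the ``cross shape'' $\langle 0|\mu|0\rangle \in \RR$ and $\langle\xi|\mu|0\rangle = \overline{\langle 0|\mu|\xi\rangle} \in \CC$ for $\xi\in\Xi_+$. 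Applying the Fourier expansion from the proof of Proposition~\ref{prop:overview:fou-spec} with the multi-index $\lambda = (\lambda_1,\dots,\lambda_d)$ ranging over the product spectrum yields
\[
 f(x) = \sum_{\lambda,\lambda'} \langle\lambda'|\mu|\lambda\rangle\,\langle\lambda|\varrho|\lambda'\rangle\,e^{2\pii(\lambda'-\lambda)\bullet x}.
\]

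I would finish by checking that only frequencies in $\Xi$ survive and that the coefficients are freely realizable. By construction $\langle\lambda|\varrho|\lambda'\rangle$ vanishes unless $\lambda,\lambda' \in \{0\} \cup \Xi_+$, and $\langle\lambda'|\mu|\lambda\rangle$ then vanishes unless $\lambda=0$ or $\lambda'=0$; so the remaining differences $\lambda'-\lambda$ lie in $\{0\}\cup\Xi_+\cup(-\Xi_+) = \Xi$, and Hermiticity of $\mu$ automatically enforces $b_{-\xi} = b_\xi^*$. The $|\Xi_+|$ complex degrees of freedom $\langle\xi|\mu|0\rangle$ and the one real degree $\langle 0|\mu|0\rangle$ can then be scaled to match any prescribed $b_\xi$ for $\xi\in\Xi_+\cup\{0\}$. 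The total qubit count is $dn = d\lceil\log_2(|\Xi|+1)\rceil$, as claimed. The main (mild) challenge is the bookkeeping that guarantees all ``off-cross'' summands with $\lambda,\lambda'$ both nonzero cancel, so that no spurious frequency outside $\Xi$ leaks in; beyond that everything reduces to direct computation, and no attempt is made to minimize the resources.
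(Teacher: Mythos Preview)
Your construction is essentially the paper's: $d$ blocks of qubits, a diagonal Hamiltonian $H_j$ on each block whose eigenvalues supply the $j$th coordinates of the desired frequencies, and a ``cross-shaped'' observable encoding the Fourier coefficients. The differences are cosmetic. The paper takes the initial state to be the full uniform superposition and uses a \emph{componentwise} cross condition on~$\mu$ (for each $j$, one of $\ell_j,\ell'_j$ is zero, so every frequency $\xi\in\Xi$ is hit by exactly one matrix element of~$\mu$), whereas you take a tailored initial state supported on $\{\,|0\rangle\,\}\cup\{\,|\xi\rangle:\xi\in\Xi_+\,\}$ and a \emph{global} cross condition ($\lambda=0$ or $\lambda'=0$). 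Both work for the same reason; neither is more general. Your closing worry about ``off-cross'' summands needing to cancel is unfounded: they are not cancelled, they are identically zero because $\langle\lambda'|\mu|\lambda\rangle=0$ there by fiat.

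There is one genuine (if easily repaired) slip: your qubit count does not meet the stated bound. You set $n=\lceil\log_2(|\Xi|+1)\rceil$ per block, so the total is $d\lceil\log_2(|\Xi|+1)\rceil$, which strictly exceeds $d\log_2(|\Xi|+1)$ whenever $|\Xi|+1$ is not a power of~$2$. The fix is to notice that your own construction never touches the eigenstates for $\xi\in -\Xi_+$: only the $|\Xi_+|+1$ vectors $|0\rangle$ and $|\xi\rangle$, $\xi\in\Xi_+$, appear in $\varrho$ and~$\mu$. Hence each $H_j$ needs at most $(|\Xi|+1)/2$ distinct eigenvalues, and you may take $n=\bigl\lceil\log_2\bigl((|\Xi|+1)/2\bigr)\bigr\rceil$; then $n<\log_2(|\Xi|+1)$ always, and the total $dn$ is within the claimed bound. (The paper does the analogous halving via its helper lemma, and sharpens further by using only the projected set $\pr_j(\Xi)$ in block~$j$.)
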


Hence, we see that (a) the frequency sets of parameterized quantum circuits can be completely arbitrary, and (b) every function
in $\Fun_\Xi$ arises from a parameterized quantum circuit.

The practical value of Proposition~\ref{prop:overview:FourierComputability} is doubtful, as taking derivatives through shift
rules requires that the frequency sets are known.

\section{Finite-support shift rules, feasibility}\label{ssec:overview:def-feasible-shift-rule}
In this paper, a \textit{shift rule} is defined as a (finite complex Borel) measure~$\phi$ on $\RR^d$, which is applied to a
function $f\in\Fun_\Xi$ via convolution:
\begin{equation*}
  \phi * f(x) = \int f(x-u) \, d\phi(u)
\end{equation*}
We are interested in \textit{finite} shift rules, i.e., measures with finite support.  This restricts~$\phi$ to choosing a
finite set $A\subset\RR^d$ and complex numbers $u\in\CC^A$ (we will later restrict, wlog, to real coefficients $u\in\RR^A$), and
defining
\begin{equation}\label{eq:phi-shift-rules-ansatz}
  \phi_{A,u} := \sum_{a\in A} u_a \delta_a,
\end{equation}
where $\delta_a$ is the Dirac distribution / point measure at the point~$a$, which means $\int f(x) \,d\delta_a(x) = f(a)$ for
all reasonable functions~$f$.  For the whole convolution, we get
\begin{equation*}
  \phi_{A,u} * f\,(x) = \sum_{a\in A} u_a \, \delta_a * f = \sum_{a\in A} u_a f(x-a).
\end{equation*}
Given $\alpha\in\NN^d$, a shift rule~$\phi$ is called \textit{feasible} for $\Xi,\alpha$, if convolution against it realizes the
$\partial^\alpha$ partial derivative for all real-valued functions with Fourier spectrum contained in~$\Xi$, i.e.,
\begin{equation}\label{eq:def-basic-conv-deriv}
  \phi*f = \partial^\alpha f \text{ for all $f\in\Fun_\Xi$.}
\end{equation}

From now on, when we say ``shift rule'', we mean ``finite (but not necessarily feasible) shift rule''.

\section{System of equations and existence of (finite) feasible shift rules}\label{ssec:overview:sys-eqns}
Given a finite set $A\subset\RR^d$ and $u\in\CC^A$, applying the Fourier transform on both sides of
equation~\eqref{eq:def-basic-conv-deriv}, we find that $\phi$ is a feasible shift rule if and only if
\begin{equation}\label{primal:GlSys:hatphi-equals-2piixi}
  \hat\phi(\xi) = (2\pii\xi)^\alpha    \qquad\text{for all $\xi\in\Xi$;}
\end{equation}
we refer to the supplementary material for the details.
Using the fact that $\widehat{\delta_a}(\xi) = e^{-2\pii a\bullet\xi}$ for all $\xi\in\RR^d$, we can continue to an equivalent
system of equations formulated in the notation $\phi_{A,u}$ as in~\eqref{eq:phi-shift-rules-ansatz}:
\begin{equation}\label{primal:GlSys:sum-u_a_e^axi-equals-2piixi}
  \sum_{a\in A} u_a e^{ -2\pii a\bullet\xi } = (2\pi i \xi)^\alpha  \qquad\text{for all $\xi\in\Xi$.}
\end{equation}
As one can see, the system is linear in~$u$, with coefficients that are trigonometric functions of the vectors in the set~$A$.

To summarize: $\phi_{A,u}$ as in~\eqref{eq:phi-shift-rules-ansatz} is a feasible shift rule for $\Xi,\alpha$ if, and only if, it
is a solution to the the system of $\abs{\Xi}$ equations~\eqref{primal:GlSys:sum-u_a_e^axi-equals-2piixi}.

\mypar%
It is not, up to this point, obvious that for every wild combination of $\Xi$'s and~$\alpha$'s a feasible (finite!) shift rule
should even exist.  In the case where~$\Xi$ is equi-spaced the existence is easy (and known in the community, referred to as
``equidistant frequencies'' for $d=1$ in~\cite{Wierichs-Izaac-Wang-Lin:gen-shift:2021}).  The general case is discussed
in~\cite[Eqn.~(90)]{Wierichs-Izaac-Wang-Lin:gen-shift:2021}, but the author of the current paper wasn't able to find in Wierichs
et al.\ any attempt of an argument why their parameterized family of shift rules for the ``general case'' in $d=1$ should
contain one that's actually feasible (it's probably obvious and I'm just missing it).
We prove the existence of finite feasible shift rules for all $d,\Xi$ in the supplementary material.

We also note that, provided a single feasible shift rule exists, there exists a whole ``manifold''\footnote{Read: ``lots!''} of
them.

\section{Cost of a shift rule}\label{ssec:overview:cost}
There are several possibilities to define the ``cost'' of a shift rule.  The first one is
what~\cite{Wierichs-Izaac-Wang-Lin:gen-shift:2021} calls ``number of unique circuits'' and what in our terminology is the size
(cardinality) of the support of the measure: For $\phi_{A,u}$ as in~\eqref{eq:phi-shift-rules-ansatz}, the support is (contained
in) the set~$A$, and its size is~$\abs{A}$ (if all $u_*$ are non-zero).

In this paper the quantity we use as the cost of a shift rule $\phi$ is the usual norm\footnote{\label{def:norm-of-measure}%
  The one which underlies the total-variation: For a complex measure $\mu$ on $\Omega$, it is defined as
  $\displaystyle \Nm{\mu}_1 := \sup_{f} \int f\,d\mu$, where the supremum ranges over measurable functions with supremum norm at
  most~$1$. } %
of a finite measure, which we denote by $\Nm{\place}_1$.
Specializing to our situation, for $\phi_{A,u}$ as in~\eqref{eq:phi-shift-rules-ansatz}, we have
\begin{equation}\label{eq:def-cost}
  \cost(A,u) := \Nm{  \phi_{A,u}  }_1 = \sum_{a\in A} \abs{u_a}.
\end{equation}
As noted above, this definition is the same as Wierichs et al.'s ``number of shots'' up to factors.

Before we discuss the justification(s) for the choice of this cost function, we need to address complex vs.~real measures.  The
observant reader has noticed that we allow complex coefficients $u\in\CC^A$ in the shift
rules~\eqref{eq:phi-shift-rules-ansatz}, while in the literature, only real numbers have been considered.  It can be proven that
for every feasible shift rule~$\phi$, there always exists a feasible shift rule that is a real signed measure, i.e., has only
real coefficients when written in the form~\eqref{eq:phi-shift-rules-ansatz}), and whose cost is at most that of~$\phi$.

\mypar%
Hence, for the rest of the paper, we consider real coefficients $u\in\RR^A$ only.

\subsection{Cost of a shift rule as an error norm}  %
The first justification of the choice of the cost function comes from Harmonic/Numerical Analysis: It is a folklore
fact\footnote{%
  See~\cite{Rudin:real-complex-analysis:1987} for the background.  The inequality ``$\ge$'' follows from
  $\abs{\int f \,d\phi} \le \int \abs{f} \,d\abs{\phi} \le \Nm{f}_\infty \int d\abs{\phi} = \Nm{f}_\infty \Nm{\phi}_1$ where
  $\abs{\phi}$ is the total variation measure; the inequality ``$\le$'' is in the Riesz Representation Theorem.} %
that for every finite signed (i.e., real valued) Borel measures~$\phi$ on~$\RR^d$, we have
\begin{equation}\label{eq:1Nm-measure-equals-opnorm-integral}
  \Nm{\phi}_1
  =
  \sup
  \biggl\{   \absB{  \int f\,d\phi }
  \biggm|
  f \in \mathscr{C}_b(\RR^d)\text{, } \Nm{f}_\infty \le 1 \biggr\},
\end{equation}
where we have denoted by $\mathscr{C}_b$ the Banach space of continuous bounded functions, with
the supremum norm $\Nm{f}_\infty := \sup_{x} \abs{f(x)}$.

This implies that
\begin{equation*}
  \Nm{\phi}_1
  =
  \sup_{\Nm{f}_\infty\le 1} \Nm{ \phi*f }_\infty,
\end{equation*}
where the supremum extends over the functions in $\mathscr{C}_b(\RR^d)$ again.
In other words, $\Nm{\phi}_1$ is equal to the operator norm for the operator
\begin{equation*}
  D\colon \mathscr{C}_b(\RR^d) \to \mathscr{C}_b(\RR^d)
  \colon
  f \mapsto \phi*f.
\end{equation*}

As a general principle, norms of operators give error estimates.  Suppose for given $\Xi\subset\RR^d$, $\alpha$, we are
interested in the partial derivative $\partial^\alpha$ of a function $f\in\Fun_\Xi$, and we have a feasible finite-support shift rule
$\phi$ that accomplishes this.  But~$f$ is available only with an error (or noise), so the function that we can evaluate is
$\tilde f := f+h$ where $h\in\mathscr C_b(\RR^d)$.  In general, the Fourier frequencies of the error function, $h$, can be
outside of the set of frequencies~$\Xi$.  In the case of parameterized quantum circuits, where $\tilde f(x)$, for given~$x$, is
the average of discrete samples, $h$ can be expected to have arbitrarily large frequencies.

The question how much the error, $h$, messes up the computation of the derivative via convolution with~$\phi$ is upper bounded
by the operator norm.  In the worst case (i.e., supremum norm) over all $x\in\RR^d$, the noise is ``blown up'' by a factor of
$\Nm{\phi}_1$:
\begin{equation*}
  \Nm{  D\tilde f - Df  }_\infty
  =
  \Nm{ Dh }_\infty
  \le
  \Nm{D}\cdot\Nm{h}_\infty
  =
  \Nm{\phi}_1 \cdot \Nm{h}_\infty.
\end{equation*}
The inequality is tight, i.e., there exist $h$'s for which equality holds.

This reasoning is, of course, typical in any kind of Harmonic/Numerical Analysis context.

\subsection{Cost of a shift rule as worst-case standard deviation}
In this paragraph, we essentially repeat with a little more detail the calculation and approximation of Wierichs et al.\ that
leads to their definition of the ``number of shots'' in \cite[Eqn.~(14)]{Wierichs-Izaac-Wang-Lin:gen-shift:2021}.

Let us specialize to the case that the observable ($\mu$ in \eqref{def:exp-val-fun:1d}
and~\eqref{def:exp-val-fun:any-d}) has only $\pm1$ eigenvalues, e.g., is a (tensor product of)
Pauli obserable(s).  If~$f$ is such an expectation value function, then, for all $x\in\RR^d$, the
variance when evaluating $f(x)$ is $1-f^2(x)$.

As everybody knows, when the expectation of a finite sum $\sum_{a\in A} X_a$ of independent random variables $X_a$, $a\in A$, is
to be estimated by sampling each summand~$X_a$ independently a number of $N_a$ times and taking the average, then the
variance-minimizing way to distribute a budget~$N$ of shots, $N = \sum_a N_a$, is to choose $N_a$ proportional to the standard
deviation of $X_a$, for all~$a$.  As this is only possible in exceptional situations,\footnote{%
  For each~$a$, the standard deviation fraction $N\cdot \sigma_a / \sum_b\sigma_b$ would have to be an integer. } %
this perspective is \textsl{asymptotic,} i.e., we take $N\to\infty$.  In this limit, the resulting variance is asymptotic to
$\frac{1}{N} (\sum_a \sigma_a)^2$, where $\sigma_a$ is the standard deviation of $X_a$, for all~$a$.

In our case, when estimating a partial derivative at $x\in\RR^d$ by convolution with $\phi_{A,u}$ as
in~\eqref{eq:phi-shift-rules-ansatz} (with $u\in\RR^A$), we have $X_a = u_a\cdot F(x-a)$, $a\in A$, where for all
$(F(y))_{y\in\RR^d}$ is family of independent, $\pm1$-valued random variables with expectations $\Exp F(y) = f(y)$ for all~$y$.
The variance is $\abs{u_a}^2 (1-f(x-a)^2)$, and with optimally distributed shots, the asymptotic variance is
\begin{equation}
  \tfrac{1}{N} \lt( \sum_{a\in A} \abs{u_a}\sqrt{1-f^2(x-a)\,} \rt)^2.
\end{equation}
We take the square root, to obtain the asymptotic standard deviation
\begin{equation*}
  \tfrac{1}{\sqrt N}\sum_{a\in A} \abs{u_a}\sqrt{1-f^2(x-a)\,}.
\end{equation*}

To manage this expression, we are now taking the worst case over $x$ and~$f$: We are lower
bounding $f^2(x-a)$ by~$0$, for all $a\in A$.  This results in the upper bound of the standard
deviation of, asymptotically,
\begin{equation*}
  \tfrac{1}{\sqrt N} \sum_{a\in A} \abs{u_a} = \cost(A,u)/\sqrt{N}.
\end{equation*}

Taking the worst case is not just a convenient primitive of computational theory, in our case, it is also a bow to the dreaded
``Barren Plateaus'' phenomenon~\cite{McClean-Boixo-Smelyanskiy-Babbush-Neven:barren:2018}, where the expectation value functions
are flat.  (Clearly, lower bounding~$f^2$ by a constant(!) other than~$0$ doesn't change the comparison relations between
asymptotic standard deviations of different shift rules, and hence has no influence on which shift rule is optimal.)

We emphasize that we are very optimistic about the ingenuity of an estimator applying a shift rule given its support: We allow
it to somehow magically know how to distribute a budget of shots among the support of the shift rule in an optimal way,
depending on the values of the expectation value function at the support points!  Alas, in the worst case, it doesn't help at
all.

\subsection{Cost of a shift rule as a heuristic for sparsity minimization}\label{ssec:overview:cost:sparsity}
In the literature, the other cost factor for a shift rule is what~\cite{Wierichs-Izaac-Wang-Lin:gen-shift:2021} call ``number of
unique circuits'', and what we refer to as the size of the support of the Borel measure, the set~$A$ in the notation
of~\eqref{eq:phi-shift-rules-ansatz}.  Let us assume for a moment that our goal was not to minimize the standard deviation, but
to minimize the support, while a large but finite candidate set~$A$ that allows feasible shift rules with support contained in
it has been fixed.  The task at hand is to find a solution to a (finite, underdetermined) system of linear equations whose
support (number of non-zero entries of a vector) has minimal size:
\begin{equation*}
  \begin{split}
    {}                &\min \{a \mid u_a \ne 0 \}\\
    \text{subject to }&Eu = k
  \end{split}
\end{equation*}
for an appropriately chosen matrix~$E$ and vector~$k$; see above.  This is an NP-hard optimization problem, but, as we all know,
the standard heuristic for it is to replace the term under the minimization by the 1-norm: $\min \sum_a \abs{u_a}$:
\begin{equation}\label{eq:donoho}
  \begin{split}
    {}                &\min \sum_a \abs{u_a}\\
    \text{subject to }&Eu = k
  \end{split}
\end{equation}
In many cases, provable guarantees can be made on the effectiveness of that heuristic~\cite{Donoho-Tanner:sparsePNAS:2005}.

Hence, minimizing $\Nm{\phi}_1$ over all feasible shift rules can be justified completely independently from error norms and
worst-case standard deviations, simply as the goal to heuristically minimize the support of the shift rule, in other words,
maximize its sparsity.

\section{``Primal'', ``dual'' and weak duality}\label{ssec:overview:primal+dual+weakduality}
In optimization vernacular, the ``primal'' optimization problem is the one you want to solve, while the ``dual'' optimization
problem is one which helps you in solving or proving theorems about the primal.  The two are related by ``duality theorems''
which make statements about pairs of primal/dual feasible solutions.

From what we have discussed, given $d, \Xi, \alpha$, the optimal finite feasible shift rule is the
optimal solution (if attained) of the following optimization problem ($\phi$ is the variable):
\begin{subequations}\label{eq:overview:primal--beauty}
  \begin{align}
    \text{Infimum of }&\Nm{  \phi  }_1                                            \label{primal:obj}\\
    \text{subject to }&                                                           \notag\\
                      &\phi*f = \partial^\alpha f \text{ for all $f\in\Fun_\Xi$,} \label{primal:GlSys:phi*-equals-rond}\\
                      &\text{$\Supp\phi$  finite.}                                \label{primal:supp-phi--finite}
  \end{align}
\end{subequations}

We have noted above that the ostensibly infinite system of linear equations~\eqref{primal:GlSys:phi*-equals-rond} is equivalent
to each of the (finite) systems \eqref{primal:GlSys:hatphi-equals-2piixi} and~\eqref{primal:GlSys:sum-u_a_e^axi-equals-2piixi},
but for now, we stick with the form in~\eqref{eq:overview:primal--beauty}, because it gives an easy motivation for the Weak
Duality Theorem.

In terms of optimization theory, the finiteness condition is puzzling: If $d,\Xi,\alpha$ are such that the infimum is not
attained, do we have to pass to some kind of infinite-dimensional closure of our set of finite-support shift rules? We want to avoid
that, mainly because it isn't necessary, as we will see later.  The puzzlement is mitigated by considering the following
``dual'' optimization problem ($f$ is the variable):
\begin{subequations}\label{eq:overview:dual--beauty}
  \begin{align}
    \text{Supremum of }&(-\partial)^\alpha f(0)      \label{dual:obj}               \\
    \text{subject to  }&                             \notag                         \\
                       &\Nm{ f }_\infty \le 1        \label{dual:norm-f_infty-le-1} \\
                       &f\in \Fun_\Xi.               \label{dual:f-in-Fun_Xi}
  \end{align}
\end{subequations}
Note that $(-\partial)^\alpha=(-1)^{\abs{\alpha}}\partial^\alpha$.  (The notation $\abs{\alpha}$, for $\alpha\in\NN^d$, stands
for $\sum_j \alpha_j$.  )

The optimization problem~\eqref{eq:overview:dual--beauty} is of a form that is quite common in convex optimization, and both
theory and computation are manageable.
Indeed, the feasible region (meaning, the set of all \textit{dual feasible}~$f$, i.e., the~$f$'s satisfying the constraints
\eqref{dual:norm-f_infty-le-1},\eqref{dual:f-in-Fun_Xi}) is a subset of a finite dimensional real vector space (namely
$\Fun_\Xi$, of dimension $\abs{\Xi}$), and the subset is defined by infinitely many linear
constraints~\eqref{dual:norm-f_infty-le-1}:
\begin{equation}\label{dual:norm-f_infty-le-1:expanded}
  \text{for each $a\in\RR^d$:}\quad -1 \le f(a) \le 1.
\end{equation}
Moreover, the objective function $f\mapsto (-\partial)^\alpha f(0)$ is linear.

We have the following critical (albeit super easy) fact.
\begin{theorem}\label{thm:dual:nonempty-cpct-opt}
  Let $d\in\NN$, $\Xi\subset\RR^d$ finite symmetric, and $\alpha\in\NN^d$ be given.

  The dual feasible region is non-empty and compact.
  In particular, the objective function attains its supremum in at least one of its points.
\end{theorem}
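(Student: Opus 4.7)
The plan is to reduce this to the classical fact that in a finite-dimensional normed space the closed unit ball is compact, and that a linear (hence continuous) function attains its extrema on a compact set.

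First I would address non-emptiness by exhibiting an explicit dual feasible function: the zero function $f \equiv 0$ belongs to $\Fun_\Xi$ (take all Fourier coefficients $b_\xi = 0$) and trivially satisfies $\Nm{0}_\infty = 0 \le 1$. So the feasible region is non-empty.

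For compactness, the key observation is that $\Fun_\Xi$ is a real vector space of finite dimension $\abs{\Xi}$ (as exhibited by the Fourier representation in the excerpt), and the restriction of $\Nm{\place}_\infty$ to $\Fun_\Xi$ is in fact a genuine norm. The only nontrivial point to check is that it separates points, i.e., if $f(x) = \sum_{\xi\in\Xi} b_\xi e^{2\pii \xi\bullet x}$ is bounded and identically zero, then all $b_\xi = 0$. This follows from the linear independence of the characters $x\mapsto e^{2\pii\xi\bullet x}$ for distinct frequencies $\xi\in\RR^d$, which is a standard and elementary Harmonic Analysis fact (it can be proven, e.g., by evaluating limits of averages $\frac{1}{T^d}\int_{[0,T]^d} f(x) e^{-2\pii\eta\bullet x} dx$ as $T\to\infty$ to isolate each $b_\eta$). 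Hence $\Nm{\place}_\infty$ is a norm on the finite-dimensional space $\Fun_\Xi$, so its closed unit ball---which is precisely the dual feasible region---is compact by the standard finite-dimensional Heine--Borel property (equivalently, all norms on a finite-dimensional real vector space are equivalent, and the closed unit ball of any norm on $\RR^n$ is closed and bounded, hence compact).

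Finally, I would verify that the dual objective $f \mapsto (-\partial)^\alpha f(0)$ is linear on $\Fun_\Xi$: differentiating under the finite sum gives
\begin{equation*}
  (-\partial)^\alpha f(0) = \sum_{\xi\in\Xi} b_\xi \, (-2\pii\xi)^\alpha,
\end{equation*}
which is a (real) linear function of the coefficient vector $b$ (using $b_{-\xi} = b_\xi^*$ to see that the result is real). Being linear on a finite-dimensional space, it is continuous, and so attains its supremum on the compact dual feasible region. There is essentially no hard step here; the only place where one must be slightly careful is in confirming that $\Nm{\place}_\infty$ really is a norm (not merely a seminorm) on $\Fun_\Xi$, which reduces to the linear independence of distinct exponentials as noted above.
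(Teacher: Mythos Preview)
Your proof is correct and is the natural elementary argument: exhibit $f=0$ for non-emptiness, observe that $\Fun_\Xi$ is finite-dimensional with $\Nm{\place}_\infty$ a genuine norm (via linear independence of distinct exponentials), so the closed unit ball is compact, and conclude by continuity of the linear objective. The paper defers its proof to the supplementary material and calls the fact ``super easy'', so your argument is almost certainly the intended one; there is really only one straightforward route here.
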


For the proof of the theorem we refer to the supplementary material.

\subsection{Weak duality}\label{ssec:overview:primal+dual+weakduality:weakduality}
To give an idea why~\eqref{eq:overview:dual--beauty} can be considered a ``dual'' of~\eqref{eq:overview:dual--beauty}, we will
prove a ``weak duality theorem'':
\begin{quote}\slshape
  The objective function values of the feasible solutions to the maximization problem are less than or equal to the objective
  function values of the feasible solutions to the minimization problem.
\end{quote}

We give a slightly more general statement where a restriction to finite support of the measure is not necessary.

\begin{proposition}[Weak Duality Theorem]\label{prop:overview:weak-duality}
  Let~$\phi$ be a finite signed Borel measure on~$\RR^d$ satisfying~\eqref{primal:GlSys:phi*-equals-rond}, and let~$f$ be
  feasible for~\eqref{eq:overview:dual--beauty}.
  Then
  \begin{equation}
      (-\partial)^\alpha f(0)
      \le
      \Nm{\phi}_1.
  \end{equation}
\end{proposition}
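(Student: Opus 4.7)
The plan is to reduce the statement to a single evaluation of the primal feasibility identity at the point $0 \in \RR^d$, followed by the standard bound on integrating a bounded function against a finite signed measure.

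First I would take an arbitrary primal-feasible measure $\phi$ (i.e., $\phi*g = \partial^\alpha g$ for every $g \in \Fun_\Xi$) and an arbitrary dual-feasible $f$ (i.e., $f \in \Fun_\Xi$ with $\Nm{f}_\infty \le 1$). Because $f$ itself lies in $\Fun_\Xi$, I can specialize the primal feasibility equation to $f$ and evaluate both sides at $x=0$, yielding
\begin{equation*}
  \partial^\alpha f(0)
  \;=\; (\phi*f)(0)
  \;=\; \int f(-u)\,d\phi(u),
\end{equation*}
where the last equality is just the definition of convolution between a measure and a function evaluated at the origin.

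Next I would invoke the folklore identity~\eqref{eq:1Nm-measure-equals-opnorm-integral} recalled in Section~\ref{ssec:overview:cost}: for any finite signed Borel measure $\phi$ on $\RR^d$ and any $g \in \mathscr{C}_b(\RR^d)$,
\begin{equation*}
  \absB{\int g\,d\phi} \;\le\; \Nm{\phi}_1 \cdot \Nm{g}_\infty.
\end{equation*}
Applying this with $g := f(-\place)$, which is still continuous and bounded with $\Nm{g}_\infty = \Nm{f}_\infty \le 1$, I obtain $\abs{\partial^\alpha f(0)} \le \Nm{\phi}_1$. In particular this gives both $\partial^\alpha f(0) \le \Nm{\phi}_1$ and $-\partial^\alpha f(0) \le \Nm{\phi}_1$, so a fortiori $(-\partial)^\alpha f(0) = (-1)^{\abs{\alpha}} \partial^\alpha f(0) \le \Nm{\phi}_1$, which is the claim. (Equivalently: $\Fun_\Xi$ is a real vector space closed under negation and the norm constraint is preserved under $f \mapsto -f$, so replacing $f$ by $-f$ turns $\partial^\alpha f(0)$ into $(-\partial)^\alpha f(0)$ when $\abs{\alpha}$ is odd.)

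There is no serious obstacle here: the only thing to keep track of is the sign discrepancy between $\partial^\alpha$ (which appears through the convolution identity) and $(-\partial)^\alpha$ (which appears in the dual objective), and this is absorbed either by the absolute-value step in the norm bound or by symmetry of the dual-feasible set under $f \mapsto -f$. No finiteness of $\Supp\phi$ is used, which matches the slightly stronger formulation advertised in the statement.
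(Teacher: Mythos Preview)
Your proof is correct and follows essentially the same route as the paper: evaluate the primal feasibility identity at~$0$, then bound the resulting integral by $\Nm{\phi}_1\cdot\Nm{f}_\infty\le\Nm{\phi}_1$. The only cosmetic difference is where the sign is handled: the paper first passes to $f(-\place)\in\Fun_\Xi$ (using symmetry of~$\Xi$) so that $(-\partial)^\alpha f(0)$ appears directly on the left, whereas you apply feasibility to~$f$ itself and absorb the sign afterwards via the absolute value in the norm bound.
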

\begin{proof}
  Abbreviating $x\mapsto f(-x)$ to $f(-\place)$, we calculate:
  \begin{equation}\label{eq:overview:ad-hoc-weak-duality}
    \begin{aligned}
      (-\partial)^\alpha f(0)
      &=
      \partial^\alpha(f(-\place))(0)
      \\
      &=
      \phi*(f(-\place))  (0)                &&\comment{by \eqref{primal:GlSys:phi*-equals-rond} and~\eqref{dual:f-in-Fun_Xi}}
      \\
      &=
      \int f(u-0) \, d\phi(u)
      \\
      &\le
      \Nm{f}_\infty \cdot \Nm{\phi}_1     &&\comment{by~\eqref{eq:1Nm-measure-equals-opnorm-integral}}
      \\
      &\le
      \Nm{\phi}_1.                        &&\comment{by~\eqref{dual:norm-f_infty-le-1}}
    \end{aligned}
  \end{equation}
  In the second equation, we have also used that for all~$f$, if $f \in\Fun_\Xi$ then $f(-\place) \in\Fun_\Xi$, as~$\Xi$ is
  symmetric.
\end{proof}

\section{The computational perspective}\label{ssec:overview:computation}
In this paper, we mostly use optimization as a tool for proving theorems, ultimately proving optimality of some feasible finite
shift rules.  The author finds that demonstrating how the optimization problems \eqref{eq:overview:primal--beauty}
and~\eqref{eq:overview:dual--beauty} could be solved computationally is not only fun, but also, importantly, helps understand
the theory.

This section is not mathematically rigorous.  For example, we will omit cases in the algorithms and gloss over details in the
justifications.  Making it rigorous would require to discuss too much background Convex Optimization theory.
In other words, while we will try to make as concrete and practical as possible \textsl{how} the methods work, we warn the
reader that understanding \textsl{why} they work is impossible without solid knowledge of Convex Optimization theory.

\subsection{Solving the dual optimization problem}\label{ssec:overview:computation:dual}
We start with the dual problem~\eqref{eq:overview:dual--beauty}, because its solution is canonical.  From a computational theory
perspective, the thrust of the matter is that, using the Ellipsoid algorithm, you can solve the optimization problem (to
provable optimality) efficiently, if you can (could!) solve efficiently the so-called ``separation problem'', which is an
algorithmic version of the separating hyperplane theorem: Given a point, either assert that it is feasible (i.e., inside the
convex set described by the constraints), or find an inequality that is valid for all feasible points and violated by the given
point.

We must supply a precision, $\eps>0$, to the Ellipsoid algorithm indicating that an ``$\eps$-almost feasible'' solution whose
objective function value is at most~$\eps$ worse than the mathematical optimum is sufficient; we refer
to~\cite{GrotschelLovaszSchrijver1993}, Definition~(2.1.10), for the exact formulation.  The running time will depend
polynomially on $\log(1/\eps)$.  Similarly, the separation problem should accept as input a precision $\delta>0$
(Definition~(2.1.13) in~\cite{GrotschelLovaszSchrijver1993}).

In this paper, where the computational aspect is illustrative and not essential for the theoretical part, we ignore the
computational complexity theory entirely, describing the practitioner's approach.
In our situation, from a practical point of view, the separation problem is the following:
\begin{quote}
  Given as input $d,\Xi$ and a function $f\in\Fun_\Xi$, return either the assertion $\Nm{f}_\infty \le 1$, or an $a\in\RR^d$
  such that $\abs{f(a)} > 1$ (where both inequalities are subject to a precision, $\delta>0$).
\end{quote}
The algorithm for solving~\eqref{eq:overview:dual--beauty} then consists in iteratively performing the following steps: Solve
the optimization problem with the infinitely many inequalities reduced to a finite list defined by a finite set $A\subset\RR^d$;
feed the resulting optimal (but not necessarily dual feasible) solution~$f^\star$ into an algorithm solving the separation
problem; update the finite set~$A$ of constraints and iterate, or exit with an optimal dual (near-)feasible solution.

Let's start by encoding functions in $\Fun_\Xi$.  Arbitrarily fix $\Xi_+ \subset \Xi$ in such a way that~$\Xi$ is the disjoint
union of the three sets $\{0\}$, $\Xi_+$, $-\Xi_+$.
Now, we represent the ``$f\in\Fun_\Xi$'' in~\eqref{eq:overview:dual--beauty} (and also in the input to the separation problem)
by its sine-cosine expansion:
\begin{equation}\label{LP:dual:optvariables}
  f
  =
  \mu
  + \sum_{\xi\in\Xi_+} \cos(2\pi \place \bullet\xi) \cdot x_\xi
  - \sin(2\pi \place\bullet\xi) \cdot y_\xi;
\end{equation}
the actual optimization variables are $\mu\in\RR$, $x,y \in\RR^{\Xi_+}$.  In that way the
constraint~\eqref{dual:f-in-Fun_Xi} is automatically satisfied.  The number of variables is, of course, equal to the dimension
of the real vector space $\Fun_\Xi$, namely $1+2\abs{\Xi_+} = \abs{\Xi}$.

The objective~\eqref{dual:obj} is obtained by taking $(-\partial)^\alpha$ at~$0$ of the function
in~\eqref{LP:dual:optvariables}.  With the convenient abbreviations
\begin{equation}\label{eq:def-shortcut-Re-Im}
  \begin{aligned}
    c_\xi &:= \Re((2\pii\xi)^\alpha), \quad\text{ and}\\
    s_\xi &:= \Im((2\pii\xi)^\alpha).
  \end{aligned}
\end{equation}
this gives\footnote{A small calculation is necessary to see this.  The reason why we set things up like this is because it goes
  nicely with duality.}
\begin{subequations}
  \begin{equation}\label{LP:dual:obj}
    \sum_{\xi\in\Xi_+}   c_\xi \cdot x_\xi  +  s_\xi \cdot y_\xi.
  \end{equation}

  The constraint~\eqref{dual:norm-f_infty-le-1} is written out in infinitely many linear inequalities as
  in~\eqref{dual:norm-f_infty-le-1:expanded}, two for every $a\in\RR^d$:
  \begin{equation}\label{LP:dual:two-ieqs}
    -1
    \le
    \quad
    \mu + \sum_{\xi\in\Xi_+} \cos(2\pi a\bullet\xi)\cdot x_\xi - \sin(2\pi a\bullet\xi)\cdot y_\xi
    \quad
    \le
    +1.
  \end{equation}
\end{subequations}

Fig.~\ref{fig:algo:dual} shows the complete algorithm for solving~\eqref{eq:overview:dual--beauty} to optimality --- provided we
\textsl{could} solve the separation problem.  We have added a tolerance parameter $\delta>0$ to replace the strict inequality in
the separation problem: Instead of the computationally unpractical $\abs{f(a)}>1$ we require $\abs{f(a)}>1+\delta$.  As a
consequence, the algorithm will, of course, return only a near-feasible solution.

\begin{figure}[ht]
  \begin{algorithm}[H]\caption{Solve-Dual}
    \Input{$d,\Xi_+\subset\RR^d,\alpha\in\NN^d$; $\delta\ge0$}
    \Output{Optimal near-feasible solution to~\eqref{eq:overview:dual--beauty} (in $\mu,x,y$ variables)}
    \BlankLine %
    \nl Initialize $A$ to some (finite) subset of $\RR^d$; e.g., $A:=\emptyset$.               \\
    \BlankLine                                                                                 %
    \nl                                                                                        %
    \Repeat{$A_+ = \emptyset$}{                                                                %
      \nl\label{algo:dual:line--LP}%
      \begin{minipage}[t]{0.9\linewidth}
        Solve the following Linear Program, in the variables $\mu\in\RR$, $x,y\in\RR^{\Xi_+}$:\\[1ex]
        $\displaystyle
        \begin{aligned}
          \qquad\qquad
          \textsl{maximize }   & \textsl{expression~\eqref{LP:dual:obj}}\\
          \textsl{subject to } & \textsl{inequalities~\eqref{LP:dual:two-ieqs}, for all $a\in A$}
        \end{aligned}
        $
      \end{minipage}                                                                             \\[1ex]
      \nl Obtain an optimal\footnote{%
        We are glossing over a detail here: The LP may be unbounded (indeed, if you start with $A=\emptyset$, that is the expected
        outcome), and the LP solver will return the data of a half-line contained in the LP feasible region.  Generally, separation
        algorithms can be adapted to handle that situation.  } %
      LP solution $\mu^\star, x^\star, y^\star$, with objective function value $v^\star$.\\
      \nl \SeparationRoutine{}{                                                     %
        \nl Initialize $A_+ := \emptyset$                                          \\
        \nl Find argmin $a_\text{min}$ and argmax $a_\text{max}$ of the function \\
        $\displaystyle\qquad\qquad
        f^\star\colon a\mapsto
        \mu^\star + \sum_{\xi\in\Xi_+} \cos(2\pi a\bullet\xi)\cdot x^\star_\xi - \sin(2\pi a\bullet\xi)\cdot y^\star_\xi$ \\
        \nl If $f^\star(a_\text{min}) < -1-\delta$, update $A_+ :=          \{a_\text{min}\}$  \\
        \nl If $f^\star(a_\text{max}) >  1+\delta$, update $A_+ := A_+ \cup \{a_\text{max}\}$  \\
        \nl \Return $A_+$                                                                      %
      }
      \BlankLine
      \nl Update $A := A\cup A_+$
    }
    \BlankLine                                                                                 %
    \nl \Return Optimal near-feasible solution $\mu^\star,x^\star,y^\star$ with objective function value~$v^\star$.
  \end{algorithm}
  \caption{Algorithm ``Solve-Dual''}\label{fig:algo:dual}
\end{figure}

Iterated LP algorithms with ``cutting plane generation'' (i.e., adding inequalities violated by the current optimal point),
often based on the Simplex algorithm for Linear Programming, are a standard tool in computational optimization to solve to
optimality convex optimization problems with linear objective where there is a large number of inequalities.

We invite the reader to try it out: The author has coded a toy demonstration, for $d=2$ only, of the algorithm as a notebook;
see Appendix~\ref{apx:source-code}.

\subsection{Finding an optimal shift rule}\label{ssec:overview:computation:primal}
A similar iterative approach, referred to as ``column generation'' in Linear Programming literature, would help us with the
primal problem~\eqref{eq:overview:primal--beauty}.  Since we have the finiteness-of-the-support
condition~\eqref{primal:supp-phi--finite}, a minimum may or may not be attained, but we can still hope for a shift rule that is
feasible and whose cost is at most~$\eps$ larger than the infimum, for a given $\eps>0$.  In that approach, in every iteration,
for a fixed finite set $A\subset\RR^d$, we would be solving the following Linear Program.  The variables are $u^+,u^- \in\RR^A$:
\begin{subequations}\label{eq:overview:primal-fixed-A-LP}
  \begin{align}
    \text{Minimum of } & \sum_{a\in A} u^+_a + u^-_a  \\
    \text{subject to } &
                         \label{eq:overview:primal-fixed-A-LP:eqns}
                         \begin{alignedat}[t]{4}
                           &                           &&
                           \sum_{a\in A} u^+_a - u^-_a                               &&= 0,    \\
                           &\forall\; \xi\in\Xi_+\colon&&
                           \sum_{a\in A} (u^+_a - u^-_a) \cos( 2\pi a \bullet\xi)    &&= c_\xi,\\
                           &\forall\; \xi\in\Xi_+\colon&&
                           \sum_{a\in A} (u^+_a - u^-_a) \sin(-2\pi a \bullet\xi)    &&= s_\xi,\\
                         \end{alignedat}\\
                       & u^+,u^- \ge 0
  \end{align}
\end{subequations}
The non-linear absolute value $\abs{u}$ that occurs in~\eqref{eq:overview:primal--beauty} via~\eqref{eq:def-cost} has been
``linearized'' by replacing the optimization variable~$u$ from~\eqref{eq:overview:primal--beauty} by its positive and negative
parts $u^\pm$, i.e., $u=u^+-u^-$.

Solving the LP will return either in the optimal feasible shift rule with support contained in~$A$, or the assertion that no
feasible shift rule with support contained in~$A$ exists.

We are not pursuing the column-generation approach here, mainly because it requires a lot of LP theory while not adding much
magic.  Indeed: The primal column generation method is mathematically exactly the same the dual column generation method: Each
step in one is the LP-dual of the corresponding step in the other.

\subparagraph{Finding an optimal shift rule when a dual-optimal basis is available.}  %
Instead, we now explain how we can computationally find a feasible finite-support shift rule whose cost equals that returned by
Algorithm~\ref{fig:algo:dual}, with $\eps,\delta$ set to~$0$ to simplify the exposition.  Indeed, the support of the resulting
shift rule will be contained in the last set~$A$ that was used to solve the LP in line~\ref{algo:dual:line--LP} there.  With the
notations used in Algorithm~\ref{fig:algo:dual}, take the values of the variables in the iteration where the Separation Routine
returns $A_+=\emptyset$, i.e., the last iteration.
We also assume that the optimal solution is a so-called ``vertex solution'' which comes with what is called a ``basis'' in LP
vernacular: A set of inequalities that are satisfied with equality by the vertex solution, and whose left-hand-side vectors are
linearly independent spanning a space of dimension~$\abs{\Xi}$.  (In other words: The coefficients of the selected inequalities
form a square, invertible matrix.)
We index the inequalities of the basis by the (disjoint) sets $A^+$ and $A^-$, where the elements $a\in A^+$ index inequalities
of the form ``$f(a) \le 1$'', and the elements of~$a\in A^-$ index inequalities of the form ``$-1 \le f(a)$''.

To obtain a feasible shift rule with support contained in~$A^+\cup A^- \subseteq A$ whose cost equals $v^\star$, we need to
solve for $u^+,u^-$, the system of linear equations consisting of equations~\eqref{eq:overview:primal-fixed-A-LP:eqns}, and
\begin{equation}\label{eq:overview:support-of-u^+,u^-}
  \begin{aligned}
    u^+_a &= 0 &\text{for all $a\notin A^+$} \\
    u^-_a &= 0 &\text{for all $a\notin A^-$.}
  \end{aligned}
\end{equation}
Linear Programming theory guarantees:
\begin{itemize}
\item The existence of a solution;
\item That the obtained $(u^+,u^-)$ is a feasible solution for the LP~\eqref{eq:overview:primal-fixed-A-LP}
\item That the cost of $u^+,u^-$ in the LP~\eqref{eq:overview:primal-fixed-A-LP} is equal to $v^\star$, and hence optimal by
  weak LP duality.
\end{itemize}

Set $u_a := u_a^+-u_a^-$ for all $a\in A$, \textit{et voil\`a,} $\phi_{A,u}$ defined as in~\eqref{eq:phi-shift-rules-ansatz} is
a the desired shift rule.

From the definition of an LP-basis above, we see that the size of the support $A^+\cup A^-$ of the shift rule is at most
$2\abs{\Xi_+}+1 = \abs{\Xi}$.

In our source code, see Appendix~\ref{apx:source-code}, we don't have to explicitly solve the system of linear equations: That
is implicitly done inside the LP solver: We can just query the result from the ``dual LP-solution''.

\subsection{Finding an optimal shift rule when a dual solution is known}\label{ssec:overview:computation:primal-w-knowledge}
We conclude this subsection by giving a method that, in the lattice-generating case, given an optimal dual solution ``on paper''
but no dual optimal basis, recovers an optimal shift rule by solving a single Linear Program.

Let $\Xi\subset\RR^d$ be (finite, symmetric, and) lattice generating, so that, wlog, we may assume that $\Xi\subset\ZZ^d$, so
that the functions in $\Fun_\Xi$ are $\ZZ^d$-periodic.

Denoting the known dual optimal solution by $f^\star$, let
\begin{equation*}
  A^\pm := \{ a \in \lt[-\nfrac12,\nfrac12\rt[^d \mid f^\star(a) = \pm 1\},
\end{equation*}
and (to simplify the notation) $A := A^+\cup A^-$.

Now, the LP is easily described: it is \eqref{eq:overview:primal-fixed-A-LP}, with the additional
equations~\eqref{eq:overview:support-of-u^+,u^-}.

The LP is of the form~\eqref{eq:donoho} from \S\ref{ssec:overview:cost:sparsity} in the motivation of the cost of a shift rule
based on a sparsity minimizing Linear Program.  This means that, if, for a given~$A$, if several optimal shift rules exist, we
can hope that one that is sparse will be returned.  That turns out to be true in practice: We have implemented a demonstration
of the method as a notebook, see Appendix~\ref{apx:source-code}.  The sparsest optimal shift rule is not always found.

It is important to understand that, because of ``complementary slackness'' (see below), every feasible solution of the LP is an
\textsl{optimal} shift rule, and that the objective function of the LP here does not contribute to optimality in terms of cost
of the shift rule, \textsl{only,} heuristically, to sparsity.

\mypar%
Now finally back to mathematical rigor.  Phew!

\section{Strong duality and complementary slackness}\label{ssec:overview:strong-duality}
Weak duality already has useful consequences: If you happen to stumble upon a feasible shift rule~$\phi$ along with a
dual-feasible function~$f$, such that their objective function values coincide, i.e., $(-\partial)^\alpha f(0) = \Nm{\phi}_1$
holds, then you can conclude that~$\phi$ is an \textsl{optimal} shift rule, and~$f$ is optimal for the dual.  If you are not
lucky enough to have that equation, you can still conclude from weak duality that the cost of your shift rule~$\phi$ is at most
$\Nm{\phi}_1 - (-\partial)^\alpha f(0)$ larger than the infimum.

\subsection{Strong duality}%
Strong duality adds to weak duality the certainty that if primal and dual optimal solutions exist, then their objective function
values coincide.  Our theorem is tailored for our situation.

\begin{theorem}[Strong duality of shift rule optimization]\label{thm:strong-duality}
  Let $d\in\NN$, $\Xi$ finite symmetric $\subset\RR^d$, and $\alpha\in\NN^d$ be given.

  \begin{enumerate}[label=(\alph*)]
  \item\label{thm:strong-duality:infimum} %
    The infimum in the primal optimization problem is equal to the maximum in the dual
    optimization problem.
  \item\label{thm:strong-duality:primalopt} %
    If~$\phi^\star$ is a finite feasible shift rule that is an optimal solution of the primal optimization problem, then there
    exists a dual feasible~$f$ such that
    \begin{equation}\label{thm:strong-duality:eqn}
      \Nm{\phi^\star}_1 = (-\partial)^\alpha f(0).
    \end{equation}
  \end{enumerate}
\end{theorem}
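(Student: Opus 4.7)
Part~\ref{thm:strong-duality:primalopt} follows immediately from~\ref{thm:strong-duality:infimum}: if $\phi^\star$ is primal optimal, then $\Nm{\phi^\star}_1$ equals the primal infimum, which by~\ref{thm:strong-duality:infimum} equals the dual maximum, and Theorem~\ref{thm:dual:nonempty-cpct-opt} guarantees this maximum is attained at some $f \in \Fun_\Xi$.

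For~\ref{thm:strong-duality:infimum}, denote the primal infimum by $p^\star$ and the dual maximum by $d^\star$. Weak duality (Proposition~\ref{prop:overview:weak-duality}) gives $p^\star \ge d^\star$. For the reverse, the plan is to squeeze $p^\star$ from above through a sequence of finite linear programs and invoke finite LP strong duality. Fix an increasing chain $A_1 \subseteq A_2 \subseteq \cdots \subseteq \RR^d$ of finite sets whose union is dense in $\RR^d$, arranged so that every $A_n$ contains the support of some fixed finite feasible shift rule (existence guaranteed by the supplementary material alluded to in~\S\ref{ssec:overview:sys-eqns}). For each $n$, the restricted primal --- minimizing $\sum_{a\in A_n}\abs{u_a}$ subject to~\eqref{primal:GlSys:sum-u_a_e^axi-equals-2piixi} --- is a feasible finite LP with optimum $p_n$. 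A direct computation identifies its LP-dual with the restriction of~\eqref{eq:overview:dual--beauty} that imposes $\abs{f(a)} \le 1$ only for $a \in A_n$; call its optimum $d_n$. Finite LP strong duality yields $p_n = d_n$. The monotonicity observations $p^\star \le p_n$ (shrinking the admissible support worsens the primal) and $d_n \ge d^\star$ (dropping constraints improves the dual) are immediate.

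Chaining, $p^\star \le p_n = d_n$ for every $n$, so it suffices to prove $d_n \to d^\star$. Set $K_n := \{f \in \Fun_\Xi : \abs{f(a)} \le 1 \text{ for all } a \in A_n\}$ and $K_\infty := \{f \in \Fun_\Xi : \Nm{f}_\infty \le 1\}$. Density of $\bigcup_n A_n$ together with continuity of the functions in $\Fun_\Xi$ yields $\bigcap_n K_n = K_\infty$. Once $n$ is large enough that the linear evaluation map $\Fun_\Xi \to \RR^{A_n}$, $f \mapsto (f(a))_{a \in A_n}$, is injective --- which can be arranged by fixing $\abs{\Xi}$ points in sufficiently general position in $A_n$ --- the seminorm $f \mapsto \max_{a \in A_n}\abs{f(a)}$ becomes a genuine norm on the finite-dimensional space $\Fun_\Xi$, so $K_n$ becomes compact. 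A standard nested-compact-sets argument then upgrades $\bigcap_n K_n = K_\infty$ to Hausdorff convergence $K_n \downarrow K_\infty$, and the continuous linear objective $f \mapsto (-\partial)^\alpha f(0)$ therefore has its maxima converge: $d_n \to d^\star$. The main obstacle is precisely this last step: all the difficulty is concentrated in (i) keeping the $K_n$ uniformly bounded so that Bolzano--Weierstrass applies, and (ii) transferring pointwise bounds on a dense set to the supremum norm --- both consequences of $\dim \Fun_\Xi = \abs{\Xi} < \infty$.
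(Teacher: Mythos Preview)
Your argument is correct and follows the route the paper signals: the main text defers the proof to the supplementary material with the remark that it ``follows from Linear Programming theory,'' and your finite-$A_n$ restriction together with finite LP strong duality and the compactness/density limit in~$\Fun_\Xi$ is exactly the natural way to make that precise. The reduction of~\ref{thm:strong-duality:primalopt} to~\ref{thm:strong-duality:infimum} via Theorem~\ref{thm:dual:nonempty-cpct-opt} is also how the paper's structure intends it.
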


The way how Theorem~\ref{thm:strong-duality} is phrased might appear a bit odd once the complete picture is known: Optimal
finite-support shift rules always exist.  However, Theorem~\ref{thm:strong-duality} follows from Linear Programming theory, while for
the existence of optimal shift rules, we have to raise the stakes to graduate level math.

\subsection{Complementary slackness}%
``Complementary slackness'' is the Linear Programming theory equivalent of the Karush-Kuhn-Tucker (first-order) optimality conditions in general (convex)
optimization.  KKT is the inequality-constrained extension of Lagrange multipliers in multi-variable calculus.  In the
complementary slackness conditions, the dual solution takes the role of the multiplier.

\begin{theorem}[Complementary slackness for shift rule optimization]\label{thm:complementary-slackness}
  Let $d\in\NN$, $\Xi$ finite symmetric $\subset\RR^d$, and $\alpha\in\NN^d$ be given.

  If~$\phi$ is a finite feasible shift rule, and~$f$ dual feasible (i.e., feasible for~\eqref{eq:overview:dual--beauty}), then
  the following are equivalent:
  \begin{itemize}
  \item Both $\phi$ and~$f$ are optimal;
  \item $\Supp \phi \subseteq \{a\in\RR^d \mid \abs{f(a)} = 1 \}$ in such a way that for all~$a\in\RR^d$ the following logical
    implications hold:
    \begin{subequations}
      \begin{alignat}{3}
        \phi(a)&<0       &\quad\Longrightarrow&\quad&  f(a)&=-1,\\
        \intertext{and}
        \phi(a)&>0       &      \Longrightarrow    &&  f(a)&=+1.
      \end{alignat}
    \end{subequations}
  \end{itemize}
\end{theorem}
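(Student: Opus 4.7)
The plan is to read off complementary slackness from the inequality chain used in the Weak Duality Theorem (Proposition~\ref{prop:overview:weak-duality}), together with the Strong Duality Theorem (Theorem~\ref{thm:strong-duality}). The key observation is that every step in the derivation
\begin{equation*}
  (-\partial)^\alpha f(0)
  = \int f(u)\, d\phi(u)
  \le \Nm{f}_\infty \cdot \Nm{\phi}_1
  \le \Nm{\phi}_1
\end{equation*}
is an inequality whose tightness can be characterized pointwise. By strong duality, a primal feasible $\phi$ and a dual feasible $f$ are both optimal if, and only if, their objective function values coincide, i.e., $(-\partial)^\alpha f(0) = \Nm{\phi}_1$; and this equality holds if, and only if, both inequalities above become equalities.

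First I would rewrite the integral for a finite signed shift rule $\phi = \sum_{a\in A} u_a \delta_a$ with $u\in\RR^A$. Then $\Nm{\phi}_1 = \sum_{a\in A}\abs{u_a}$ and $\int f(u)\,d\phi(u) = \sum_{a\in A} u_a f(a)$. Under the dual feasibility bound $\abs{f(a)}\le 1$ for every $a$, we have termwise $u_a f(a) \le \abs{u_a}$, with equality if and only if either $u_a = 0$ or $f(a) = \sgn(u_a)$. Summing, the equation $\sum_a u_a f(a) = \sum_a \abs{u_a}$ is therefore equivalent to the conjunction, for every $a$, of: ``$u_a>0 \Rightarrow f(a)=+1$'' and ``$u_a<0 \Rightarrow f(a)=-1$'', which is exactly the implication displayed in the theorem. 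In particular it forces $\Supp\phi \subseteq \{a \mid \abs{f(a)}=1\}$.

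It remains to observe that once $\int f\,d\phi = \Nm{\phi}_1$ holds and $\phi$ is not the zero measure, the second inequality $\Nm{f}_\infty \cdot \Nm{\phi}_1 \le \Nm{\phi}_1$ is automatically tight: the sign condition at any $a\in\Supp\phi$ forces $\abs{f(a)}=1$, hence $\Nm{f}_\infty = 1$. (The degenerate case $\phi = 0$ can only occur when $\alpha = 0$, which is outside the scope $\alpha\in\NN^d$ of the paper; if one wanted to cover it, one verifies trivially.) Combining these two reductions gives the chain of equivalences: both $\phi$ and $f$ are optimal $\Longleftrightarrow$ $(-\partial)^\alpha f(0) = \Nm{\phi}_1$ $\Longleftrightarrow$ $\sum_a u_a f(a) = \sum_a \abs{u_a}$ $\Longleftrightarrow$ the stated sign conditions.

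The main (and only) subtlety is making sure that the direction ``both optimal $\Rightarrow$ equality'' truly uses Theorem~\ref{thm:strong-duality}\ref{thm:strong-duality:infimum}, which asserts that the primal infimum equals the dual maximum (without this, one could only conclude a one-sided implication via weak duality). Once strong duality is invoked, the rest is the elementary termwise tightness argument above; no additional convex-analytic machinery is needed.
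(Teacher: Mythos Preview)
Your proposal is correct and follows the standard derivation of complementary slackness from the equality case of the weak-duality chain together with strong duality; this is precisely the route set up by the paper (the weak-duality proof in Proposition~\ref{prop:overview:weak-duality} is exactly the inequality chain you analyze, and Theorem~\ref{thm:strong-duality}\ref{thm:strong-duality:infimum} supplies the needed equality of optimal values). The paper defers its own proof to the supplementary material, but given how the weak- and strong-duality statements are framed, there is essentially no other natural argument, so your approach matches.

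One small remark: your paragraph about the ``second inequality'' $\Nm{f}_\infty\cdot\Nm{\phi}_1 \le \Nm{\phi}_1$ is unnecessary, since you already bypassed that intermediate step by arguing termwise $u_a f(a)\le\abs{u_a}$ directly from $\abs{f(a)}\le 1$; the discussion of $\phi=0$ is therefore also superfluous (and in any case $\phi=0$ can be feasible for $\alpha\in\NN^d$ when $\Xi=\{0\}$, where the sign conditions are vacuously satisfied).
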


The proof of the theorem is in the supplementary material.

In applications of Linear Programming theory to other fields of theoretical computer science / mathematics, complementary
slackness is a powerful\footnote{Even if somewhat \dots vintage.} tool for proving theorems.  Not surprisingly,
Theorem~\ref{thm:complementary-slackness} is our main ingredient when proving optimality for some shift rules.

\section{Existence of optimal finite-support shift rules, and size of the support}\label{ssec:overview:ex-opt}
The attentive reader will have noticed that the question whether optimal (finite!) shift rules exist, is unto this point
unresolved.  Here is the answer.

\begin{theorem}\label{thm:overview:ex-opt}
  Let $d\in\NN$, let $\alpha\in\NN^d$, and $\Xi\subset\RR^d$ non-empty and symmetric.

  The optimization problem~\eqref{eq:overview:primal--beauty} attains an optimal solution, i.e., there exists a feasible
  (finite) shift rule~$\phi$ in which the infimum is attained.

  Moreover, there always exists an optimal shift rule whose support has size at most~$\abs{\Xi}$.
\end{theorem}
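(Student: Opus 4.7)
The plan is to leverage strong duality together with a compactness/extreme-point argument in the style of Chebyshev quadrature. By Theorem~\ref{thm:dual:nonempty-cpct-opt} and Theorem~\ref{thm:strong-duality}, the primal infimum equals the dual maximum $v^\star$, which is attained at some dual feasible $f^\star$. It then suffices to exhibit a finite shift rule $\phi^\star$ with $\Nm{\phi^\star}_1 = v^\star$ and $\abs{\Supp\phi^\star}\le\abs{\Xi}$. Feasibility is characterized by \eqref{primal:GlSys:sum-u_a_e^axi-equals-2piixi} as a system of $\abs{\Xi}$ real linear equations $\sum_a u_a v(a) = w$ in the weights $u_a$, with ``moment vectors'' $v(a) := (e^{-2\pii a\bullet \xi})_{\xi\in\Xi}$ lying on the compact torus inside $\CC^\Xi$ and target $w := ((2\pii\xi)^\alpha)_{\xi\in\Xi}$.

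Compactify and relax: since $V := v(\RR^d)$ is bounded, its closure $\overline{V}\subset\CC^\Xi$ is compact. Consider the relaxation obtained by minimizing $\Nm{\psi}_1$ over signed Borel measures $\psi$ on $\overline{V}$ subject to $\int z\,d\psi(z)=w$. Weak-$*$ compactness of bounded balls of measures on the compact space $\overline{V}$, together with weak-$*$ continuity of the $\abs{\Xi}$ linear moment constraints, yields an optimum $\psi^\star$ on $\overline{V}$; its value agrees with $v^\star$ by weak duality (Proposition~\ref{prop:overview:weak-duality}) adapted to this setting. By the standard basic-feasible-solution theorem for LPs with $\abs{\Xi}$ equality constraints, $\psi^\star$ may be chosen as an extreme point whose positive and negative parts have joint support on at most $\abs{\Xi}$ points; this will give the support bound as soon as we can pull $\psi^\star$ back to $\RR^d$.

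Pulling back: each support point $z_i^\star \in \overline{V}$ of $\psi^\star$ must be realized as $v(a_i^\star)$ for some $a_i^\star\in\RR^d$. When $\Xi$ is lattice-generating, $v$ factors through a compact torus $\RR^d/L$ and $\overline{V}=V$, so the lift is automatic. Otherwise $V$ is a dense but proper subgroup of $\overline{V}$, and the plan is to invoke a suitably extended form of complementary slackness (Theorem~\ref{thm:complementary-slackness}): the support of $\psi^\star$ lies in $\{z\in\overline{V} : \abs{\tilde f^\star(z)}=1\}$, where $\tilde f^\star$ is the canonical continuous extension of $f^\star$ to $\overline{V}$; combined with the density of $V$ in $\overline{V}$ and an implicit-function argument whereby the $\abs{\Xi}$-dimensional weight space absorbs generic small perturbations of the $\le\abs{\Xi}$ support points without violating the moment equations or increasing the $1$-norm, one may replace each ideal $z_i^\star$ by a nearby $v(a_i^\star)\in V$. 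This last step is the main obstacle; indeed the excerpt itself warns that existence here requires ``graduate level math'' beyond the LP-style duality of Theorem~\ref{thm:strong-duality}, reflecting the subtle interplay between the almost-periodic structure of $\Fun_\Xi$ and the failure of $\RR^d$ to be compact in the topology relevant to the moment map $v$.
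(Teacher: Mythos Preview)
Your overall strategy---strong duality plus a Carath\'eodory-type extreme-point argument---is the right one, and for lattice-generating~$\Xi$ (where $V=\overline{V}$) your proof is essentially complete and matches the paper's approach.

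The gap is exactly where you flag it: the pull-back in the non-lattice-generating case. Your sketched perturbation argument does not close it. Moving an ideal support point $z_i^\star\in\overline{V}\setminus V$ to a nearby $v(a_i^\star)\in V$ and re-solving for the weights yields (by continuity, when the moment matrix is invertible) a nearby feasible shift rule, but its $1$-norm will in general differ from~$v^\star$ by a nonzero amount. You therefore obtain only a \emph{sequence} of feasible finite shift rules with costs tending to~$v^\star$, i.e., you recover the infimum statement you already had from Theorem~\ref{thm:strong-duality}\ref{thm:strong-duality:infimum}, not attainment. Trying to fix this by invoking complementary slackness on the extended $\tilde f^\star$ does not help either: the level set $\{|\tilde f^\star|=1\}$ need not intersect~$V$ near each $z_i^\star$, and even when it does, constraining the perturbation to stay on that level set leaves you with fewer degrees of freedom than equations.

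The paper avoids the compactify-and-pull-back man\oe uvre altogether. Its key tool is the strengthened Carath\'eodory lemma for infinitely generated convex cones (Lemma~\ref{lem:miscmath:cvx:inf-gen}): if a point lies in the \emph{interior of the closure} of the cone generated by a set~$R$ in an $m$-dimensional space, then it is already a conic combination of at most~$m$ elements of~$R$ itself---no closure required. Applying this with $m=\abs{\Xi}$ to the cone generated by the vectors $\{\pm v(a):a\in\RR^d\}$ (suitably lifted to track the cost coordinate), one shows that the target is in the interior of the closed cone and hence is a conic combination of at most~$\abs{\Xi}$ genuine generators $\pm v(a_i)$ with $a_i\in\RR^d$. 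That directly produces an optimal finite shift rule with support of the claimed size, without ever leaving~$\RR^d$. Establishing the interior condition (and proving the lemma) is the ``graduate level math'' the paper alludes to.
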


We also prove a strengthening of the size-of-the-support statement for $d=1$ to $\abs{\Xi}-(\alpha\%2)$ (where ``$\%$'' stands
for remainder upon division).
Wierichs et al.\ consider \textsl{only} shift-rules with that support, as they come out of their particular construction of
shift rules.  But they never ask the question whether shift rules with larger support but smaller cost (i.e., ``number of
shots'') might exist.

\begin{theorem}\label{thm:overview:ex-opt-sparse-1d}
  Let $\alpha\in\NN$, and $\Xi\subset\RR$ non-empty and symmetric.
  There always exists an optimal shift rule whose support has size at most~$\abs{\Xi} - (\alpha\%2)$.
\end{theorem}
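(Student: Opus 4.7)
If $\alpha$ is even, $\alpha\%2 = 0$, and the bound $\abs{\Xi}$ from Theorem~\ref{thm:overview:ex-opt} already suffices. So assume $\alpha$ is odd. The plan has three steps: (i) replace any dual optimum by an odd one; (ii) extract a single-equation rank drop from the extremum condition at active points; (iii) finish with a basic-feasible-solution (BFS) argument.

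\emph{Step (i), symmetrization.} Using Theorem~\ref{thm:dual:nonempty-cpct-opt}, pick a dual optimum $f$ and set $h(x):=\tfrac12(f(x)-f(-x))$. Symmetry of $\Xi$ gives $h\in\Fun_\Xi$; the triangle inequality gives $\Nm{h}_\infty \le 1$; and a short computation using $(1-(-1)^\alpha)/2 = 1$ for $\alpha$ odd gives $\partial^\alpha h(0) = \partial^\alpha f(0)$, hence $(-\partial)^\alpha h(0) = (-\partial)^\alpha f(0)$. So $h$ is an odd dual optimum with $h(0)=0$. Writing $h(x) = -\sum_{\xi\in\Xi_+} y_\xi\sin(2\pi\xi x)$, the case $h\equiv 0$ is trivial (primal optimum zero), so assume $(y_\xi)_\xi \ne 0$.

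\emph{Step (ii), rank drop.} Any $a$ with $\abs{h(a)}=1$ is a local extremum of $h$ (since $\Nm{h}_\infty\le 1$ globally), so
\begin{equation*}
  h'(a) \;=\; -2\pi\sum_{\xi\in\Xi_+} \xi y_\xi \cos(2\pi\xi a) \;=\; 0.
\end{equation*}
Set $\Xi_0 := (\{0\}\cap\Xi)\cup\Xi_+$. The ``cosine'' block of the primal system~\eqref{primal:GlSys:sum-u_a_e^axi-equals-2piixi} consists of the real-part equations for $\xi\in\Xi_+$ together with the $\xi=0$ equation (if $0\in\Xi$); for $\alpha$ odd each has right-hand side zero, with coefficient matrix $C=(\cos(2\pi\xi a))_{\xi\in\Xi_0,\,a\in A}$ on any finite $A\subseteq\{\abs{h}=1\}$. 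Setting $y_0:=0$, every column of $C$ is annihilated by the nonzero vector $(\xi y_\xi)_{\xi\in\Xi_0}$, so $\rk C \le \abs{\Xi_0}-1$. The $\abs{\Xi_+}$ sine-part equations contribute rank at most $\abs{\Xi_+}$, so the full real system on $\RR^A$ has rank at most $(\abs{\Xi_0}-1)+\abs{\Xi_+} = \abs{\Xi}-1$.

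\emph{Step (iii), BFS extraction.} By Theorem~\ref{thm:overview:ex-opt} an optimal finite-support shift rule $\phi_0$ exists, and by Theorem~\ref{thm:complementary-slackness} applied to $\phi_0$ and $h$, its support $A:=\Supp\phi_0$ lies in $\{\abs{h}=1\}$ with signs compatible with those of $h$. The LP~\eqref{eq:overview:primal-fixed-A-LP} on variables over $A$ with these sign restrictions built in is feasible (it contains $\phi_0$), bounded (the $\ell_1$-norm of any feasible $u$ equals the primal optimum), and every feasible point is a primal optimum by Theorem~\ref{thm:complementary-slackness}. A basic feasible solution therefore exists with at most $\abs{\Xi}-1$ non-zero entries, yielding an optimal shift rule whose support has size at most $\abs{\Xi}-1=\abs{\Xi}-(\alpha\%2)$. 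The main obstacle is Step~(ii): recognizing that the extremum condition $h'(a)=0$ at active points, together with the oddness of $h$, gives exactly one nontrivial linear dependence in the $\Xi_0$-indexed rows --- this single rank drop saves one support point beyond the generic bound.
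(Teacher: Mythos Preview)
Your proof is correct. The paper defers its own proof of this theorem to supplementary material that is not included in the text provided, so a direct comparison is not possible; however, your argument stands on its own and uses only results stated in the paper (Theorems~\ref{thm:dual:nonempty-cpct-opt}, \ref{thm:complementary-slackness}, and~\ref{thm:overview:ex-opt}).

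A brief check of the key step: the odd symmetrization in Step~(i) is a clean reduction, and the crucial observation in Step~(ii) --- that $h'(a)=0$ at every active point $a\in\{|h|=1\}$ furnishes a nonzero left-kernel vector $(\xi y_\xi)_{\xi\in\Xi_0}$ for the cosine block restricted to those columns --- is exactly what drops the rank of the real feasibility system from~$|\Xi|$ to~$|\Xi|-1$. In Step~(iii), once the sign restrictions from complementary slackness are imposed, the feasible region of the LP is nonempty (it contains $\phi_0$), bounded (the objective is a sum of nonnegative variables and is constant on the feasible set, equal to the primal optimum), and hence has a vertex; that vertex has at most $\mathrm{rank}\le|\Xi|-1$ nonzeros, and by complementary slackness it is itself primal-optimal. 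Note that your argument is self-contained and does not rely on the explicit form of the dual optimum from Theorem~\ref{thm:overview:1d-dual}; had you invoked that theorem, the odd dual optimum would already be $\pm\sin(2\pi\max(\Xi)\cdot a)$ and Step~(i) would be superfluous, but your route avoids that dependency.
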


\section{Optimal primal and/or dual solutions}\label{ssec:overview:the-list}
Before we approach the examples, we recall\footnote{This is standard undergrad math.} the formal definition of what we call
a ``lattice generating'' set of vectors, as promised in the introduction.

Consider the sub-group (wrt addition) of~$\RR^d$ generated by~$\Xi$.  There are two cases: This sub-group is a lattice, or this
sub-group is dense in a linear subspace of $\RR^d$.  A \textit{lattice} is a subgroup that is generated as a group by a linearly
(over the real numbers) independent set of vectors.  For~$\Xi$ to generate a lattice is equivalent to the matrix whose columns
are the elements of~$\Xi$ having rank over the rational numbers equal to (instead of strictly greater than) the rank over the
real numbers.  If $\Xi$ generates a lattice, we say, in this paper, that it is \textit{lattice generating.}

\subsection{Dimension $d=1$}\label{ssec:overview:the-list:d=1}
\begin{theorem}\label{thm:overview:1d-dual}
  Let $\alpha\in\NN$, and $\Xi\subset\RR$ non-empty and symmetric.  The optimal solution to the ``dual'' optimization
  problem~\eqref{eq:overview:dual--beauty} is
  \begin{equation*}
    a \mapsto (\partial^\alpha\cos)(2\pi\cdot\max\Xi\cdot a).
  \end{equation*}
  In particular, the cost of an optimal shift rule is $(2\pi\max\Xi)^\alpha$.
\end{theorem}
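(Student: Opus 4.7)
Write $\xi_{\max} := \max\Xi$. The idea is to verify that the proposed $f^\star \colon a\mapsto (\partial^\alpha\cos)(2\pi\xi_{\max}\cdot a)$ is dual-feasible with objective value $(2\pi\xi_{\max})^\alpha$, and then to show via Bernstein's inequality that no dual-feasible function can achieve a strictly larger objective. The claim about the cost of an optimal shift rule then follows immediately from strong duality (Theorem~\ref{thm:strong-duality}).

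For dual feasibility, I would use that $(\partial^\alpha\cos)(t) = \cos(t + \alpha\pi/2)$, so $f^\star(a) = \cos(2\pi\xi_{\max} a + \alpha\pi/2)$ is a real-valued linear combination of $e^{\pm 2\pi i\xi_{\max} a}$. Hence $\Supp \hat f^\star \subseteq \{\pm\xi_{\max}\}\subseteq \Xi$ (using that $\Xi$ is symmetric and $\xi_{\max}\in\Xi$; the degenerate case $\Xi=\{0\}$ makes both sides of the claim zero since $\alpha\ge 1$), which gives $f^\star\in\Fun_\Xi$, and $\Nm{f^\star}_\infty\le 1$ is obvious. A direct calculation yields $\partial^\alpha f^\star(a) = (2\pi\xi_{\max})^\alpha\cos^{(2\alpha)}(2\pi\xi_{\max}a)$; evaluating at $a=0$ with $\cos^{(2\alpha)}(0)=(-1)^\alpha$ gives the dual objective value $(-\partial)^\alpha f^\star(0) = (-1)^\alpha(-1)^\alpha(2\pi\xi_{\max})^\alpha = (2\pi\xi_{\max})^\alpha$.

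The matching upper bound is the heart of the argument. For an arbitrary dual-feasible $f$, the function $f\in\Fun_\Xi$ is a finite linear combination of complex exponentials $e^{2\pi i\xi\place}$ with $|\xi|\le\xi_{\max}$, so it extends to an entire function of exponential type at most $2\pi\xi_{\max}$ whose restriction to $\RR$ is bounded. Bernstein's inequality then gives $\Nm{f'}_\infty\le 2\pi\xi_{\max}\Nm{f}_\infty$; iterating $\alpha$ times yields
\begin{equation*}
  |(-\partial)^\alpha f(0)|\;\le\;\Nm{f^{(\alpha)}}_\infty\;\le\;(2\pi\xi_{\max})^\alpha \Nm{f}_\infty\;\le\;(2\pi\xi_{\max})^\alpha,
\end{equation*}
matching the value attained by $f^\star$ and therefore certifying its dual optimality. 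Strong duality (Theorem~\ref{thm:strong-duality}), combined with the existence of an optimal primal solution (Theorem~\ref{thm:overview:ex-opt}), then yields $(2\pi\xi_{\max})^\alpha$ as the cost of any optimal shift rule.

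The main obstacle is justifying Bernstein's inequality in the exact form needed. In the lattice-generating case one can reduce by rescaling to the classical inequality for trigonometric polynomials with integer frequencies ($\Nm{T'}_\infty\le N\Nm{T}_\infty$ for $T$ with frequencies in $[-N,N]\cap\ZZ$), where the factor $2\pi$ appears as the Jacobian of the substitution $\theta = 2\pi\xi_{\max}a$. In the general real-frequency case, the correct tool is the Bernstein inequality for band-limited entire functions of exponential type, which must be cited (or proved via a standard approximation-by-integer-frequency argument). Everything else is routine verification.
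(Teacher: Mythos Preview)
Your argument is correct and, for the theorem as stated, cleaner than the paper's. The paper proceeds in three stages: first it handles the equispaced case $\Xi=\{-K,\dots,K\}$ by exhibiting an explicit primal shift rule together with the dual candidate $f^\star$ and verifying complementary slackness (Theorem~\ref{thm:complementary-slackness}), which requires a sign lemma for derivatives of modified Dirichlet kernels; then it reduces arbitrary lattice-generating $\Xi$ to the equispaced case by rescaling and enlarging $\Xi$; finally it handles non-lattice-generating $\Xi$ via simultaneous Diophantine approximation. You bypass all of this by observing that every $f\in\Fun_\Xi$ is an entire function of exponential type $2\pi\xi_{\max}$ bounded on $\RR$, so Bernstein's inequality for functions of exponential type gives the upper bound $(2\pi\xi_{\max})^\alpha$ on the dual objective in one stroke, uniformly over all $\Xi$.

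What each route buys: your approach proves the \emph{value} of the dual optimum (and hence, via strong duality and Theorem~\ref{thm:overview:ex-opt}, the cost of optimal shift rules) with no case analysis and no Dirichlet-kernel technicalities---indeed your Diophantine-approximation hedge for the non-lattice case is unnecessary, since the exponential-type version of Bernstein applies directly. The paper's constructive route, on the other hand, yields more than the bare theorem: in the equispaced case it produces the optimal primal shift rule explicitly and certifies it through complementary slackness, which is what drives the later discussion of support size (Theorem~\ref{thm:overview:ex-opt-sparse-1d}) and the single-LP computational method in \S\ref{ssec:overview:computation:primal-w-knowledge}. So your proof is the right one if the goal is just to establish the dual optimum and the optimal cost; the paper's longer path is doing double duty for the surrounding results.
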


Note that the theorem holds without conditions on~$\Xi$ (beyond symmetry): It holds whether $\Xi$ is equi-spaced or not;
whether~$\Xi$ is lattice-generating or not; whether the space $\Fun_\Xi$ consists of periodic functions or not.
Indeed, we prove Theorem~\ref{thm:overview:1d-dual} in three steps, each of which is of interest in its own right.

\begin{enumerate}[label=\arabic*.]
\item\label{enum:overview:list-1d:equispaced}%
  For $\Xi=\{-K,\dots,K\}\subset\ZZ$ we establish primal and dual feasible solutions and prove complementary slackness.
  Complementary slackness relies on a fact about signs of derivatives of modified Dirichlet kernels; in this paper we give the
  proof of that lemma only for the practically relevant cases $\alpha=1,2,3,4$, and refer to a separate paper for the more
  technical general case.

  This proves that the shift rules in~\cite{Wierichs-Izaac-Wang-Lin:gen-shift:2021} for equi-spaced Fourier spectra are, indeed,
  optimal.

\item The case of lattice-generating Fourier spectra can be reduced to Step~\ref{enum:overview:list-1d:equispaced} by a cheap
  trick.

  As the proof establishes, in particular, the dual optimal value, and since an optimal finite-support shift rule exists
  (Theorem~\ref{thm:overview:ex-opt}), by strong duality (Theorem~\ref{thm:strong-duality}\ref{thm:strong-duality:primalopt}),
  we also know the cost of the best possible feasible shift rule --- which answers a question left open
  in~\cite{Wierichs-Izaac-Wang-Lin:gen-shift:2021}.

  Moreover, knowing the dual optimal solution allows us to realize the \textsl{computational} search for a sparse optimal shift
  rule through standard Linear Programming techniques: Solving a single LP as described
  in~\S\ref{ssec:overview:computation:primal-w-knowledge}.

\item Finally simultaneous Diophantine approximation can be used to reduce the non-lattice-generating case to the lattice
  generating one.
\end{enumerate}

\subsection{Dimension $d \ge 2$}\label{ssec:overview:the-list:d-ge-2}
We give optimal dual solutions only in a special case, for which we need some notation and terminology.  Our notation for the
canonical projection mappings is the following:
\begin{equation*}
  \pr_j\colon
  \RR^d \to \RR\colon
  x \mapsto x_j.
\end{equation*}
Let us say that a frequency set~$\Xi\subset\RR^d$ is \textit{pointy,} if
\begin{equation}\label{def:pointy-Xi}
  \text{there exists $\epsilon\in\{\pm1\}^d$ such that }
  \Bigl(
  \epsilon_1 \cdot \max\pr_1(\Xi),
  \epsilon_2 \cdot \max\pr_2(\Xi),
  \dots,
  \epsilon_d \cdot \max\pr_d(\Xi)
  \Bigr)^\T.
\end{equation}

\begin{theorem}\label{thm:overview:2d-dual}
  Let $d\ge 2$, $\alpha\in\NN^d$, and $\Xi\subset\RR^d$ non-empty and symmetric.

  If~$\Xi$ is pointy with $\epsilon$ as in~\eqref{def:pointy-Xi},
  then the optimal solution to the ``dual'' optimization problem~\eqref{eq:overview:dual--beauty} is
  \begin{equation*}
    a \mapsto \prod_{j=1}^d (\partial^{\alpha_j}\cos)(2\pi\cdot\max\pr_j(\Xi)\cdot a_j).
  \end{equation*}
  In particular, the cost of an optimal shift rule is $\prod_{j=1}^d (2\pi\max\pr_j(\Xi))^{\alpha_j}$.
\end{theorem}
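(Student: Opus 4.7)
The plan is to exhibit a primal feasible shift rule with cost exactly $\prod_j (2\pi\max\pr_j(\Xi))^{\alpha_j}$ and to show that the function in the theorem statement is dual feasible with the same objective value; weak duality (Proposition~\ref{prop:overview:weak-duality}) then forces both to be optimal. Write $M_j := \max\pr_j(\Xi)$ and $f(a) := \prod_{j=1}^d (\partial^{\alpha_j}\cos)(2\pi M_j a_j)$, which expands to $\prod_j \cos(2\pi M_j a_j + \alpha_j\pi/2)$.

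First I would verify dual feasibility of $f$. Since each factor is a cosine with amplitude~$1$, one has $\Nm{f}_\infty \le 1$, which is~\eqref{dual:norm-f_infty-le-1}. The Fourier spectrum of each factor is $\{\pm M_j e_j\}$, so the spectrum of the product consists of the $2^d$ points $(\epsilon_1 M_1,\dots,\epsilon_d M_d)^\T$ for $\epsilon\in\{\pm1\}^d$. The ``pointy'' hypothesis~\eqref{def:pointy-Xi} is precisely what is needed to guarantee that each such bounding-box vertex lies in~$\Xi$ (using symmetry $\Xi=-\Xi$ as well), so $f\in\Fun_\Xi$ and~\eqref{dual:f-in-Fun_Xi} holds. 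A direct computation then gives
\begin{equation*}
(-\partial)^\alpha f(0)
= \prod_{j=1}^d (-1)^{\alpha_j}(2\pi M_j)^{\alpha_j}\cos(\alpha_j\pi)
= \prod_{j=1}^d (2\pi M_j)^{\alpha_j},
\end{equation*}
because $\cos(\alpha_j\pi)=(-1)^{\alpha_j}$.

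Next I would construct a matching primal. By Theorem~\ref{thm:overview:1d-dual} applied to the symmetric frequency set $\pr_j(\Xi)\subset\RR$, there exists a finite feasible shift rule $\phi^{(j)}$ for $(\pr_j(\Xi),\alpha_j)$ with $\Nm{\phi^{(j)}}_1 = (2\pi M_j)^{\alpha_j}$. Form the product measure $\phi := \phi^{(1)}\otimes\cdots\otimes\phi^{(d)}$, which has finite support and total variation $\prod_j (2\pi M_j)^{\alpha_j}$. To check feasibility against~\eqref{primal:GlSys:hatphi-equals-2piixi}, note that since $\Xi\subseteq\pr_1(\Xi)\times\cdots\times\pr_d(\Xi)$, any $\xi\in\Xi$ satisfies $\xi_j\in\pr_j(\Xi)$ for every~$j$, so
\begin{equation*}
\hat\phi(\xi) = \prod_{j=1}^d \widehat{\phi^{(j)}}(\xi_j) = \prod_{j=1}^d (2\pi i\xi_j)^{\alpha_j} = (2\pi i\xi)^\alpha,
\end{equation*}
as required. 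Weak duality now yields $\Nm{\phi}_1 \ge (-\partial)^\alpha f(0)$, and since the two sides are equal, both $\phi$ and $f$ are optimal, establishing the cost formula as well.

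The main obstacle is the dual-feasibility step, specifically pinning down that ``pointy'' in~\eqref{def:pointy-Xi} is exactly the statement that all $2^d$ bounding-box vertices lie in~$\Xi$ (equivalently, that $\Xi$ contains the vertex set of the box $\prod_j[-M_j,M_j]$). If instead one reads pointy to guarantee only a single corner, then the product-of-cosines $f$ is not in $\Fun_\Xi$, and one would have to either (i) replace $f$ by the single-wave function $a\mapsto \cos(2\pi\xi\cdot a + |\alpha|\pi/2)$ for the one corner $\xi$ furnished by pointiness (whose dual value is still $\pm\prod_j(2\pi M_j)^{\alpha_j}$, adjustable by sign), in which case the primal construction above must be replaced by a different, non-tensor-product optimal shift rule, or (ii) restrict the statement of the theorem to the stronger pointy meaning. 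Either way, the cost value $\prod_j (2\pi M_j)^{\alpha_j}$ for the optimum is unaffected, and weak duality together with the tensor-product primal continues to close the gap.
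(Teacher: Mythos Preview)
Your main argument---tensor-product primal plus product-of-cosines dual, closed by weak duality---is correct and is exactly the paper's two-step reduction (product-set case, then the ``cheap trick'' of containment $\Xi\subseteq\prod_j\pr_j(\Xi)$) carried out in one pass.

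The only slip is in your hedge paragraph. Under the weaker reading of ``pointy'' (only one corner $\xi^\star$ and its negative in~$\Xi$), the tensor-product primal $\phi=\phi^{(1)}\otimes\cdots\otimes\phi^{(d)}$ is \emph{still} feasible for~$\Xi$: feasibility~\eqref{primal:GlSys:hatphi-equals-2piixi} for the larger set $\prod_j\pr_j(\Xi)$ automatically implies it for the subset~$\Xi$, and the cost $\Nm{\phi}_1=\prod_j(2\pi M_j)^{\alpha_j}$ is unchanged. So in your option~(i) the single-wave dual and the \emph{same} tensor-product primal already match via weak duality; no ``different, non-tensor-product'' primal is needed. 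What the weaker reading would change is only the identity of the dual optimizer (a single wave rather than the product function named in the theorem), not the cost formula.
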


We prove Theorem~\ref{thm:overview:2d-dual} in two steps.
\begin{enumerate}[label=\arabic*.]
\item\label{enum:overview:list-2d:product}%
  If $\Xi$ is a product set, i.e., if there exist symmetric sets $\Xi_j\subset\RR$, $j=1,\dots,d$, with
  $\Xi = \prod_{j=1}^d \Xi_j$, then we can reduce to~\eqref{thm:overview:1d-dual} together and
  Theorem~\ref{thm:overview:ex-opt}.

\item If $\Xi$ is pointy, then it is contained in a fitting product set, so the same cheap trick we used above
  gives a reduction to~\ref{enum:overview:list-2d:product}.
\end{enumerate}

\section{Conclusion}\label{sec:conclu}
This paper contributes what the author believes to be novel and useful tools to the study of efficient shift rules, culminating
in dual optimal solutions for a number of situations.  The tools inspire efficient computational methods for practically
computing optimal, hopefully sparse, shift rules for the many situations where (at least to date) no analytic description of the
optimal (sparse) shift rules are known.

Some questions remain, some of them ``merely'' mathematical, others with practical relevance.

\paragraph{Dual optimal solutions for $d>2$.} %
The present papers offers dual optimal solutions for $d>2$ only for a restricted family of frequency sets (the ``pointy'' ones).
Is it possible to write down and prove dual optimal solutions for all (lattice generating) frequency sets?  Is it true that (in
the lattice generating case), the dual optimal solution is always of the form
$a\mapsto \prod_{j=1}^d(\partial^{\alpha_j}\cos)(2\pi \xi_j\bullet a_j)$ for a $\xi$ in the lattice generated by~$\Xi$?  Or even
$\xi\in\Xi$?

\paragraph{Numerically stable implementations of the methods computing shift rules.} %
In the supplementary material (see Appendix~\ref{apx:source-code}) we have given quick-and-dirty implementations of the
computational methods in this paper, based on cheap Linear Programming solvers, and with no thought given to stability.
Obviously, a proper, numerically stable implementation of these methods is needed in order to apply them practically.  The most
interesting method is certainly that in \S\ref{ssec:overview:computation:primal-w-knowledge}, when the optimal dual solution is
known on paper as it allows to exploit all the theory of this paper.

\paragraph{Several partial derivatives at the same point}
Wierichs et al.\ give combined shift rules for estimating the whole Hessian.  Is their construction optimal?  The present paper
only shows how to estimate each individual entry of the Hessian optimally, but it does not consider estimators which ``recycle''
previously made measurements to estimate several partial derivatives at the same time.  (It can be shown that, for estimating
$\partial f(x)$, measurements at~$x$ cannot be used to reduce the cost, but that's a silly scenario.)

\section*{Acknowledgments}
This research was partly funded by the Estonian Centre of Excellence in Computer Science (EXCITE), and by Estonian Research
Agency (Eesti Teadusagentuur, ETIS) through grant PRG946.

\appendix
\section*{APPENDIX}
\section{Example source code}\label{apx:source-code}
The author of this paper has implemented some of the techniques discussed here in the mathematical computation programming
language Julia\footnote{\texttt{\href{https://julialang.org/}{julialang.org}}}, in the form of
Pluto\footnote{\texttt{\href{https://plutojl.org}{plutojl.org}}} notebooks.

The notebooks are available on
GitHub\footnote{\texttt{\href{https://github.com/dojt/ShiftRulesPluto}{github.com/dojt/ShiftRulesPluto}}} for download.  Here we
go through the files in that repository with a quick description of what the notebook does.

\paragraph{File \texttt{solve-for-u.jl}}
The main function in this notebook, \verb+solve_it(;+$\Xi_+$\verb+, +$\alpha$\verb+, +$A$\verb+)+, takes a set $\Xi_+\subset\QQ$
of (positive) frequencies, and a set $A\subset\RR$, and finds a shift rule of the form~\eqref{eq:phi-shift-rules-ansatz}.  If a
feasible shift rule with that support exists, it will be found; if none exists, one that is closest to being feasible is
returned.  Closest to being feasible refers to the 2-norm, in frequency space $\Nm{\partial^\alpha - \phi}_{2,\Xi}$.

The function uses floating point arithmetic, and the precision can be chosen between 64 bits and 16384 bits.  Changing the
precision can sometimes help recognizing a set~$A$ that yields no feasible shift rule, but where the error with 64-bit precision
is small, say $10^{-16}$.

\paragraph{File \texttt{dual-opt.jl}}
This notebook implements the method that is described in the subsection~\S\ref{ssec:overview:computation:dual}.  After
terminating with an optimal set~$A$, the optimal shift rule is obtained: This can be done directly from the dual variables,
which, is equivalent to the method described in~\S\ref{ssec:overview:computation:primal}, in the case when an optimal basis is
known.

\paragraph{File \texttt{sparse-optimal-1d.jl}}
In this notebook, knowledge about the optimal dual solution for lattice-generating frequency sets with $d=1$ is used to compute
an optimal feasible shift rule using a finite, i.e., ``normal'', Linear Program, as described
in~\S\ref{ssec:overview:computation:primal-w-knowledge}.

\section{Miscellaneous math background}\label{apx:miscmath}
\subsection{Lattices}\label{apx:miscmath:lattices-and-periodicity}
We sketch the proof of the equivalence of the periodicity of a function and the lattice-generating property of its Fourier
spectrum:

\begin{lemma}
  If $f$ is a tempered distribution on $\RR^d$ that is periodic (i.e., there exists an additive subgroup~$\Mu$ of $\RR^d$ such
  that $\delta_\mu*f = f$ for all $\mu\in\Mu$), then the support of the Fourier transform of~$f$ is a lattice.
\end{lemma}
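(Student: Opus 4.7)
The plan is to Fourier-transform the periodicity relation $\delta_\mu * f = f$ and read off a constraint on $\Supp \hat f$. Using the convolution--multiplication identity $\widehat{\delta_\mu * f} = \widehat{\delta_\mu}\cdot \hat f$ for tempered distributions, together with the elementary identity $\widehat{\delta_\mu}(\xi) = e^{-2\pii\,\mu\bullet\xi}$, the hypothesis translates into the distributional identity
\begin{equation*}
  \bigl(1 - e^{-2\pii\,\mu\bullet\xi}\bigr)\,\hat f(\xi) \;=\; 0 \qquad\text{in $\mathscr{S}'(\RR^d)$, for every $\mu \in \Mu$.}
\end{equation*}

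Next, I would invoke the standard fact from distribution theory: if a smooth function $g$ and a tempered distribution $T$ satisfy $g\,T = 0$, then $\Supp T \subseteq \{g = 0\}$. Applied with $g(\xi) = 1 - e^{-2\pii\,\mu\bullet\xi}$, whose zero set is exactly $\{\xi\in\RR^d : \mu\bullet\xi \in \ZZ\}$, and intersecting over all $\mu \in \Mu$, this yields
\begin{equation*}
  \Supp \hat f \;\subseteq\; \Mu^\perp \;:=\; \bigl\{\xi\in\RR^d : \mu\bullet\xi \in \ZZ \text{ for all } \mu\in\Mu\bigr\},
\end{equation*}
the annihilator of $\Mu$ in the Pontryagin-dual sense.

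The last step is to identify $\Mu^\perp$ as (contained in) a lattice. When $\Mu$ is already a full-rank lattice, so $\Mu = B\ZZ^d$ for some invertible real $d\times d$ matrix $B$, a direct computation gives $\Mu^\perp = B^{-\T}\ZZ^d$, which is a lattice. More generally, $\Mu^\perp$ depends only on the closure $\overline\Mu$, and if $\overline\Mu$ is a discrete full-rank subgroup the same formula applies; if $\overline\Mu$ has a positive-dimensional continuous part, then $\Mu^\perp$ is itself a lattice inside the orthogonal complement of that continuous part.

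The main obstacle is rigorously justifying the distributional implication ``$gT=0 \Rightarrow \Supp T \subseteq \{g=0\}$''. The standard argument localizes near each $\xi_0$ with $g(\xi_0)\neq 0$: on a small neighborhood $U$, the function $1/g$ is smooth, so for any test function $\chi\in\mathscr{S}(\RR^d)$ with $\Supp\chi\subseteq U$ one has $\langle T,\chi\rangle = \langle gT,\chi/g\rangle = 0$, giving $\xi_0 \notin \Supp T$. A secondary subtlety is that, as stated, $\Mu$ is merely assumed to be an additive subgroup, with no rank or closure hypothesis; without such an assumption ``$\Supp\hat f$ is a lattice'' has to be read as containment in a lattice inside a subspace, which is the interpretation needed for the application to $\Fun_\Xi$, where $\Supp\hat f$ is finite to begin with.
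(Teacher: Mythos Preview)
Your approach is correct and essentially the same as the paper's: Fourier-transform the periodicity relation $\delta_\mu * f = f$ to obtain $(1-e^{-2\pii\mu\bullet\xi})\hat f = 0$ (the paper writes it as $\hat f = e^{2\pii\mu\bullet\place}\hat f$), deduce that $\Supp\hat f$ lies in the annihilator $\Mu^\perp = \{\xi : \mu\bullet\xi\in\ZZ\ \forall\mu\in\Mu\}$, and recognize $\Mu^\perp$ as a lattice. You are in fact more careful than the paper's sketch: you justify the distributional step $gT=0\Rightarrow\Supp T\subseteq\{g=0\}$ explicitly, and you handle general subgroups $\Mu$ via $\overline{\Mu}$ rather than asserting (as the paper does, not quite correctly --- think of constant~$f$) that ``either $f=0$ or $\Mu$ is a lattice''.
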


If a function $f\colon\RR^d\to\RR$ has a Fourier spectrum~$\Xi$ that generates a lattice~$\Lambda$, then the function is
periodic modulo the dual lattice (which is, indeed, a lattice)
\begin{equation*}
  \Lambda^\bot := \{ \mu\in\RR^d \mid \forall \lambda\in\Lambda\colon \mu\bullet\lambda=0 \},
\end{equation*}
i.e., for all $x\in\RR^d$ and every $\mu\in\Lambda^\bot$ we have $f(x+\mu) = f(x)$.

For the other direction, we skip the part where you have to prove that the Fourier spectrum is discrete, and go right to the
part where we prove that it also generates a discrete set (i.e., a lattice).
If a smooth, bounded function $f\colon\RR^d\to\RR$ is periodic, meaning, there exists an additive subgroup $\Mu$
of~$\RR^d$ such that for all $x\in\RR^d$ and all $\mu\in\Mu$ we have $f(x+\mu)=f(x)$, then either $f=0$ or $\Mu$ is a lattice.
In the case when~$\Mu$ is a lattice, $\hat f = e^{2\pii \mu\bullet\place} \hat f$, which implies that, for all $\xi$ in the
support of $\hat f$, we have $\mu\bullet\xi \in \ZZ$, in other words $\Xi\subset \Mu^\bot$, so that the group generated by~$\Xi$
is also contained in $M^\bot$, and hence a lattice.

\subsection{Convexity}\label{apx:miscmath:cvx}
\subsubsection{Convex cones}\label{apx:miscmath:cvx:cvxcones}
For a set of vectors~$R$, let us denote by $\cvxcone(R)$ the convex cone generated by~$R$, i.e., the set of all (finite) linear
combinations with non-negative coefficients of elements of~$R$.  A convex cone~$C$ is called \textit{polyhedral,} if there
exists a finite set~$R$ such that $C=\cvxcone R$.

\subparagraph{Lemma on infinitely-generated convex cones.}  %
We use the following math notation only in this subsection.  For a set~$C$ in a finite-dimensional real vector space, we denote:
by $\bar C$ its topological closure; and by $C^\circ$ its topological interior.
Recall that, if~$C$ is a convex set (in a finite-dimensional real vector space) then $\bar C$ is also convex.

The following strengthened version of Carath\'eodordy's theorem is one of those things in convexity theory about which the
professor says ``It's obvious'' and then a student asks, ``Why?'', and the professor spends 60min proving it.  Carath\'eodordy's
theorem is the case $\bar C = C$.

\begin{lemma}\label{lem:miscmath:cvx:inf-gen}
  Let~$C$ be a convex cone in an $m$-dimensional real vector space, and suppose~$C$ is generated by a set of vectors~$R$.
  If $c \in (\bar C)^\circ$ then there exist $r_1,\dots,r_m\in R$ with $c \in \cvxcone\{r_1,\dots,r_m\}$.
\end{lemma}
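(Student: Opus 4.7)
The plan is to reduce to the classical Carath\'eodory theorem for conic combinations via a two-step argument: first a topological step showing that $(\bar C)^\circ\subseteq C$, then the standard minimal-representation reduction from $k$ generators down to $m$.

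For the first step, I would show that $c$ actually lies in $C$, not merely in $\bar C$. Since $c\in(\bar C)^\circ$, some open ball $B(c,\varepsilon)$ is contained in $\bar C$. Observe first that $C$ must span the whole $m$-dimensional space: otherwise $C$ (and hence $\bar C$) would lie in a proper affine subspace and have empty interior, contradicting the hypothesis. Pick an $m$-simplex with vertices $p_0,\dots,p_m$ inside $B(c,\varepsilon)\cap\bar C$ positioned so that $c$ lies in the interior of $\mathrm{conv}\{p_0,\dots,p_m\}$. Because $C$ is dense in $\bar C$, for each $i$ I can approximate $p_i$ by a point $p_i^{(n)}\in C$. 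The barycentric coordinates of $c$ relative to the vertices are continuous in the vertex positions, so strict positivity of the coordinates persists under small perturbations; for $n$ large, $c$ still lies in the interior of $\mathrm{conv}\{p_0^{(n)},\dots,p_m^{(n)}\}\subseteq C$, giving $c\in C$.

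Next I would invoke Carath\'eodory for convex cones. Since $C=\cvxcone(R)$, there is a finite representation $c=\sum_{i=1}^k\lambda_i r_i$ with $\lambda_i>0$ and $r_i\in R$. Take one with $k$ minimal. If $k>m$, the $r_i$ are linearly dependent, so there exist scalars $\alpha_i$, not all zero, with $\sum_i\alpha_i r_i=0$; by flipping the sign of $\alpha$ if needed we may assume at least one $\alpha_i>0$. For any $t\in\RR$ we then have $c=\sum_i(\lambda_i-t\alpha_i)r_i$; choosing $t=\min\{\lambda_i/\alpha_i:\alpha_i>0\}$ annihilates at least one coefficient while keeping all others non-negative, contradicting minimality of $k$. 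Hence $k\le m$.

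The main obstacle is the first step: making precise the claim that a small perturbation of the simplex vertices preserves containment of $c$ in the open simplex. The cleanest way is via continuity of barycentric coordinates, together with the preliminary observation that $\bar C$ having nonempty interior forces $C$ to span $\RR^m$ so that a nondegenerate $m$-simplex actually fits inside $B(c,\varepsilon)\cap\bar C$. Once the first step is in hand, the second is the textbook Carath\'eodory reduction, which applies verbatim to an infinite generating set $R$ because only finitely many generators appear in any single conic representation.
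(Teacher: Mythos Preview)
The paper does not supply a proof of this lemma; it is stated as folklore (``one of those things in convexity theory about which the professor says `It's obvious'\,''), with the remark that Carath\'eodory's theorem is the special case $\bar C=C$.  So there is nothing to compare against.

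Your proposal is correct.  The key observation---that $(\bar C)^\circ\subseteq C$ for a convex set~$C$---is exactly what reduces the statement to the standard conic Carath\'eodory theorem, and your simplex-perturbation argument establishes it cleanly.  (An equivalent route you could also cite: since $(\bar C)^\circ\ne\emptyset$ forces $C$ to span $\RR^m$, the convex set $C$ has nonempty interior, and for a convex set with nonempty interior one has $C^\circ=(\bar C)^\circ$; hence $c\in C^\circ\subseteq C$.)  The second step is the textbook minimal-representation reduction and needs no comment beyond noting that if $k<m$ you may pad with repeated generators to reach exactly~$m$, as the statement asks for.
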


\section{Complexity of function estimation}\label{apx:FourierComplexity}
In this section we briefly consider the following questions, ignoring ``subtleties'' about processing real numbers in classical
algorithms.

\begin{quote}\it
  Given a $d\in\NN$, $\Xi\subset\RR^d$ finite \& symmetric, and a $f\colon \RR^d\to\RR$ with Fourier spectrum equal to~$\Xi$:
  \begin{enumerate}[label=(\arabic*)]
  \item\label{apx:FourierComplexity:exist1}%
    Does there exist a parameterized quantum circuit (with~$d$ parameters) whose expectation value function is~$f$?
  \item\label{apx:FourierComplexity:existUniform}%
    Does there exist a uniform family of these circuits, i.e., a polynomial time classical algorithm which, upon input of
    $d,\Xi,\hat f$, produces the PQC as in~\ref{apx:FourierComplexity:exist1}?
  \item How does the minmum depth (or other complexity measure) of the quantum circuits in~\ref{apx:FourierComplexity:exist1}
    relate to the complexity of estimating the function with a (randomized) classical circuit/algorithm?
  \end{enumerate}
\end{quote}

The questions are informally inspired by the complexity theory of Boolean functions.
We give the obvious, cheap answer to~\ref{apx:FourierComplexity:exist1}, Proposition~\ref{prop:overview:FourierComputability},
as an argument in support of the generality in which the present paper has been written, making absolutely no attempt to search
for an efficient parameterized quantum circuit --- and express the hope that the other questions might be picked up by
\textsl{smart} people.

Recall the definition of $\Diff$ from~\eqref{eq:def:Diff-op}, and, for conciseness, denote by $\Spec(A)$ the set of eigenvalues
of a given operator~$A$.

\begin{proof}[Proof of Proposition~\ref{prop:overview:FourierComputability}]
  For each $j=1,\dots,d$, consider the set $\Xi_j = \{\xi_j \mid \xi\in\Xi\}$ --- the projection of~$\Xi$ onto the $j$th
  coordinate.  For each $j=1,\dots,d$, apply Lemma~\ref{lem:apx:FourierComplexity:d=1-H} below to each of the~$\Xi_j$, and
  define $q_j := \ceil{ \log_2( (\abs{\Xi_j}+1)/2 ) }$.  Letting
  \begin{equation*}
    q := \sum_{j=1}^d  q_j \quad < d\, \log_2(\abs{\Xi}+1),
  \end{equation*}
  we take~$q$ qubits, split into~$d$ quantum registers, where, for $j=1,\dots,d$, register number~$j$ has~$q_j$ qubits.  Let us
  denote by~$H_j$ the operator from Lemma~\ref{lem:apx:FourierComplexity:d=1-H} below applied to $\Xi_j$, acting on the $j$th
  quantum register.  We then let
  \begin{equation*}
    H := \sum_{j=1}^d H_j,
  \end{equation*}
  and consider the parameterized circuit with initial state $\ket{\psi_0}$, multi-parameterized unitary evolution $U(x)$, as
  follows:
  \begin{equation*}
    \begin{split}
      \ket{\psi_0} &:= 2^{-q/2} \sum_{\gamma\in\{0,1\}^q} \ket{\gamma} \\
      U(x)         &:= e^{2\pii \sum_{j=1}^d x_j H_j}.
    \end{split}
  \end{equation*}
  Now we define a Hermitian $q$-qubit operator~$\mu$ as follows.

  First of all, we interpret each of the~$d$ quantum registers in the computational basis as encoding an unsigned integer.
  For $\ell_j,\ell'_j\in\{0,\dots,2^{q_j}\}$, $j=1,\dots,d$, we say that $(\ell,\ell')$ \textit{fits,} if, for all
  $j=1,\dots,d$, we have both $\ell'_j,\ell_j\le (\abs{\Xi_j}-1)/2$ and $\ell_j\cdot\ell'_j=0$.
  Now, for $\ell_j,\ell'_j\in\{0,\dots,2^{q_j}\}$, we define
  \begin{equation*}
    \bra{\ell'_1,\ell'_2,\dots,\ell'_d} \mu \ket{\ell_1,\ell_2,\dots,\ell_d}
    :=
    \begin{cases}
      2^{q/2} \hat f(\xi),  & \text{if $(\ell,\ell')$ fits and }
      {}                      \Bigl( \xi_{\ell'_1}-\xi_{\ell_1}\,,\; \dots\,,\; \xi_{\ell'_d}-\xi_{\ell_d} \Bigr)^\T \in \Xi  \\
      0,                    & \text{otherwise.}
    \end{cases}
  \end{equation*}
  Note that, as~$f$ is real valued, $\mu$ is indeed Hermitian.

  The final quantum circuit is the following: Starting from $\ket0$, prepare a uniform superposition over all computational
  basis states, then apply the parameterized unitary (using the usual tools, as the exponent is a linear combination of
  tensor-products of $Z$-operators), finally measure~$\mu$.  The last step is done by diagonalizing~$\mu$ using a basis change
  unitary~$W$ from the computational basis, i.e., $W\mu W^\dagger$ is diagonal in the computational basis, and implementing~$W$
  as a quantum circuit, based on the usual universality theorems (e.g., Section~4.5.2 in \cite{Nielsen-Chuang:book:2000}).
\end{proof}

\begin{lemma}[Trivial Helper Lemma]\label{lem:apx:FourierComplexity:d=1-H}
  Let $\Xi\in\RR$ be finite and symmetric, and let $0 =: \xi^{(0)} < \xi^{(1)} < \dots < \xi^{(m)}$, where
  $m := (\abs{\Xi}-1)/2$, be the non-negative elements of~$\Xi$.  There exists a Hermitian $\ceil{ \log_2(m+1) }$-qubit
  operator~$H$ such that $\Diff\Spec(H) \supseteq \Xi$.

  More precisely, the operator~$H$ is diagonal in the multi-qubit Z-basis (computational basis), and for
  $\ell,\ell' \in \{0,\dots,m\}$ (coded in binary):
  \begin{equation}\label{def:apx:FourierComplexity:d=1-H}
    \bra{\ell'} H \ket{\ell} = \delta_{\ell',\ell} \cdot \xi^{(\ell)}.
  \end{equation}
\end{lemma}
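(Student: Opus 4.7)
The plan is essentially to read off the operator from its prescription. Take $q := \lceil \log_2(m+1)\rceil$, so that the $q$-qubit computational basis $\{\ket{0},\ket{1},\dots,\ket{2^q-1}\}$ indexes at least $m+1$ states. Define $H$ by \eqref{def:apx:FourierComplexity:d=1-H} on the sub-basis $\ket{0},\dots,\ket{m}$, and extend it to the remaining indices $\ell\in\{m+1,\dots,2^q-1\}$ by, say, declaring $\bra{\ell}H\ket{\ell'} := 0$ there. This makes $H$ diagonal in the computational basis with real diagonal entries, hence Hermitian; realization as a linear combination of tensor products of $Z$-operators is standard.

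Next I would verify the inclusion $\Xi\subseteq\Diff\Spec(H)$. By construction $\{\xi^{(0)},\xi^{(1)},\dots,\xi^{(m)}\}\subseteq\Spec(H)$. Recall $\xi^{(0)}=0$, and that $\Xi$ is symmetric with non-negative part $\{\xi^{(0)},\dots,\xi^{(m)}\}$. For any $\xi\in\Xi$ with $\xi\ge 0$, we have $\xi=\xi^{(k)}$ for some $k$, so $\xi=\xi^{(k)}-\xi^{(0)}\in\Diff\Spec(H)$; for $\xi<0$, symmetry gives $-\xi=\xi^{(k)}$ for some $k$, and then $\xi=\xi^{(0)}-\xi^{(k)}\in\Diff\Spec(H)$. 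This exhausts $\Xi$ and proves the claimed inclusion.

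There is no real obstacle here — this lemma is flagged as ``Trivial'' precisely because the only content is the bookkeeping of counting qubits ($\lceil\log_2(m+1)\rceil$ suffices to index $m+1$ eigenvalues) and the observation that a symmetric set of reals with $0$ in it can be written as the set of differences with the distinguished element $\xi^{(0)}=0$. The only minor care needed is to handle the ``padding'' indices in $\{m+1,\dots,2^q-1\}$ when $m+1$ is not a power of two; these contribute extra eigenvalues to $\Spec(H)$, which is harmless since the statement asks only for $\Diff\Spec(H)\supseteq\Xi$ (not equality). The resulting $H$ plugs directly into the proof of Proposition~\ref{prop:overview:FourierComputability}, where its being diagonal in the $Z$-basis makes $e^{2\pi i\,x_j H_j}$ implementable by a standard circuit of $Z$-rotations and entangling gates.
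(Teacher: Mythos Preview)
Your proof is correct and follows essentially the same approach as the paper: define~$H$ by the prescription~\eqref{def:apx:FourierComplexity:d=1-H}, fill in the remaining diagonal entries arbitrarily (the paper says ``arbitrarily'', you choose~$0$), and observe $\Diff\Spec(H)\supseteq\Diff\{\xi^{(0)},\dots,\xi^{(m)}\}\supseteq\Xi$. Your write-up simply spells out the last inclusion in more detail than the paper bothers to.
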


Let $\gamma\in\NN^q$, and a 1-qubit operator~$A$, we use the shorthand
\begin{equation*}
  A^{\otimes\gamma} := A^{\gamma_1}\otimes A^{\gamma_2}\otimes \dots \otimes A^{\gamma_{q-1}}\otimes A^{\gamma_q}.
\end{equation*}

\begin{proof}[Proof of Lemma~\ref{lem:apx:FourierComplexity:d=1-H}]
  To define~$H$, follow~\eqref{def:apx:FourierComplexity:d=1-H}, and choose the remaining $2^q-m-1$ diagonal entries
  arbitrarily.  We have $\Diff\Spec(H) \supseteq \Diff\{\xi^{(0)},\dots,\xi^{(m)}\} \supseteq \Xi$.
\end{proof}


\bibliographystyle{plain}

\end{document}